\newcommand{\wang}[1]{\ifthenelse{\boolean{showcomments}}
	{ \textcolor[rgb]{1,0,1}{(ZW:  #1)}}{}}
\newcommand{\fliu}[1]{\ifthenelse{\boolean{showcomments}}
	{ \textcolor{red}{(FL:  #1)}}{}}
\newcommand{\ychen}[1]{\ifthenelse{\boolean{showcomments}}
	{ \textcolor{green}{(YC:  #1)}}{}}
\newcommand{\slow}[1]{\ifthenelse{\boolean{showcomments}}
	{ \textcolor{blue}{(SL:  #1)}}{}}
\theoremstyle{definition}
\newtheorem{theorem}{Theorem}
\newtheorem{lemma}[theorem]{Lemma}
\theoremstyle{definition}
\newtheorem{definition}{Definition}
\newtheorem{remark}{Remark}
\begin{document}

\title{Breaking Diversity Restriction: Distributed Optimal Control of Stand-alone DC Microgrids}

\author{
	\vskip 1em
	{
		Zhaojian Wang, Feng Liu, Ying Chen, Steven Low, \emph{Fellow}, \emph{IEEE}, and Shengwei Mei, \emph{Fellow}, \emph{IEEE}
	}
	
    \thanks{This work was supported  by the National Natural Science Foundation
		of China ( No. 51677100, No. 51377092,  No. 51621065, No. 51477081), Foundation of Chinese Scholarship Council (CSC No. 201506215034)
		, 
		the US National Science Foundation through awards EPCN 1619352, CCF 1637598, CNS 1545096, ARPA-E award DE-AR0000699, and Skoltech through Collaboration Agreement 1075-MRA. (Corresponding author: Feng Liu)
	}       % <-this % stops a space
	\thanks{Z. Wang, F. Liu, Y. Chen and S. Mei are with the Department
		of Electrical Engineering, Tsinghua University, Beijing,
		China, 100084 e-mail: (lfeng@tsinghua.edu.cn).}% <-this % stops a space
	\thanks{S. H. Low is with the Department
		of Electrical Engineering, California Institute of Technology, Pasadena, CA, USA, 91105 e-mail:(slow@caltech.edu)}
}

% The paper headers
\markboth{Journal of \LaTeX\ Class Files,~Vol.~xx, No.~xx, xx~xxxx}%
{}

% make the title area
\maketitle

% As a general rule, do not put math, special symbols or citations
% in the abstract or keywords.
\begin{abstract}
Stand-alone direct current (DC) microgrids may belong to different owners and adopt various control strategies. This brings great challenge to its optimal operation due to the difficulty of implementing a unified control. This paper addresses the distributed optimal control of DC microgrids, which intends to break the restriction of diversity to some extent. Firstly, we formulate the optimal power flow (OPF) problem of stand-alone DC microgrids as an exact second order cone program (SOCP) and prove the uniqueness of the optimal solution. Then a dynamic solving algorithm based on primal-dual decomposition method is proposed, the convergence of which is proved theoretically as well as the optimality of its equilibrium point. It should be stressed that the algorithm can provide control commands for the three types of microgrids: (i) power control, (ii) voltage control and (iii) droop control. This implies that each microgrid does not need to change its original control strategy in practice, which is less influenced by the diversity of microgrids. Moreover, the control commands for power controlled and voltage controlled microgrids satisfy generation limits and voltage limits in both transient process and steady state. Finally, a six-microgrid DC system based on the microgrid benchmark is adopted to validate the effectiveness and plug-n-play property of our designs.
\end{abstract}

% Note that keywords are not normally used for peerreview papers.
\begin{IEEEkeywords}
Distributed control, DC microgrid, optimal power flow, diversity restriction, transient constraint.
\end{IEEEkeywords}

%\markboth{IEEE TRANSACTIONS ON }
{}
%\IEEEpeerreviewmaketitle

\section{Introduction}

%\subsection{Motivation}

\IEEEPARstart{M}{icrogrids} are clusters of distributed generators (DGs), energy storage systems (ESSs) and loads, which are generally categorized into two types: alternating current (AC) and direct current (DC) microgrids \cite{Justo:AC, Planas:AC}. In the past decade, research has been concentrated on enhancing the performance of AC microgrids. However, some generations and loads are inherently DC, such as photo-voltaic (PV), battery, computer and electrical vehicle (EV) \cite{Dragicevic:DC1, Dragicevic:DC2, Elsayed:DC}. DC microgrids more naturally integrate them and can eliminate unnecessary conversion processes, which improves system efficiency and reliability. In addition, DC systems do not face problems such as reactive power compensation, frequency stability and synchronization \cite{Dragicevic:DC1}, which makes it more and more popular in power system. 
In DC microgrids, hierarchical control is often utilized \cite{Guerrero:Hierarchical, Shafiee:Hierarchical}, i.e., primary control, secondary control and tertiary control, which can be implemented in either a centralized manner or a distributed manner. In the centralized manner, a control center is needed to accumulate information from microgrids, compute command and send it back to them. With the increasing number of microgrids as well as uncertainties of renewable generations and load demands, centralized control faces a great challenge, i.e., it is less and less applicable due to problems, e.g., single point failures, heavy communication burden of control center and lack of ability to respond rapidly enough \cite{Yazdanian:Distributed}. These problems highlight the need for a distributed control strategy that will require no control center and less communication. This paper addresses this need.

In the hierarchical control architecture, the primary control is almost decentralized. The most popular control manner is the droop control \cite{Guerrero:Hierarchical, Maknouninejad:Optimal}, where load sharing is mainly determined by the droop coefficient. As pointed out in \cite{Khorsandi:A, Gu:Mode}, droop control cannot achieve proper load sharing sometimes, especially in systems with unequal resistances and different modes. Many improvements are investigated \cite{Khorsandi:A, Gu:Mode, Chen:Autonomous, Cook:Decentralized, Lu:State}. Taking into consideration the effect of different line impedances, \cite{Khorsandi:A} proposes a decentralized control strategy to achieve perfect power sharing.
In \cite{Gu:Mode}, a mode-adaptive decentralized control strategy is proposed for the power management in DC microgrid, which enlarges the control freedom compared with the conventional droop control. In \cite{Lu:State}, a decentralized method is proposed to adjust the droop coefficient by the state-of-charge of storage, which can achieve equal load sharing. However, similar to the AC power system, primary control in DC system  suffers from voltage deviation in the steady state.

To eliminate the voltage deviation, distributed secondary control is developed. The most widely used method is consensus based control \cite{Olfati-Saber:Consensus}, where there is usually a global control variable, e.g., global voltage deviation, while each agent only has its local estimation. In the DC system, each agent may represent a DG or a microgrid. By exchanging information with neighbors, the value of the variable will be identical for all agents finally \cite{Shafiee:Distributed, Shafiee:Hierarchical, Nasirian:Distributedc, Lv:Discrete}. In \cite{Shafiee:Hierarchical}, each microgrid uses dynamic consensus protocol to estimate the global averaged voltage with the local and neighboring estimation. Then, the estimated voltage is compared with the reference value and fed to a PI controller to eliminate the voltage deviation. This method is further improved in \cite{Nasirian:Distributedc} by adding a current consensus regulator, where the control goal is to achieve globally identical current ratio compared with the rated current of each microgrid. By doing so, the equal load sharing can be obtained. The discrete consensus method is used in \cite{Lv:Discrete} to restore average voltage with accurate load sharing. The consensus based secondary control can realize equality among agents, however, the results may not be optimal. 

Tertiary control is to achieve the optimal operation by controlling the power flow among microgrids or among DGs within a microgrid \cite{Che:DC, Xiao:Hierarchical, Moayedi:Distributed}. Conventionally, tertiary control provides reference operation point for the system. Its time scale is much slower than real time control. However, values of renewable generations and loads may change rapidly due to uncertainties, which makes reference point obtained by tertiary control sub-optimal in the new situation. This requires us to combine real-time coordination and steady-state optimization together, i.e., the optimization solution should be sent to the system in real time. Similar works are given in both AC system \cite{Changhong:Design, Li:Connecting, Dorfler:Breaking} and DC system \cite{Moayedi:Unifying, Hamad:Multiagent, Wang:a}. The critical thought in \cite{Moayedi:Unifying, Hamad:Multiagent, Wang:a} is that the incremental generation cost of each microgrid is identical in the steady state. In \cite{Moayedi:Unifying}, economic dispatch problem is formulated and the incremental generation cost is regarded as consensus variable. Using the consensus method, the incremental generation cost will be identical in the steady state, and optimality is achieved. Similar method is also used in \cite{Wang:a}, where the sub-gradient is added to the consensus approach in order to accelerate the convergence. These works are very inspiring in combination of optimal operation and real time control, but they still have some restrictions. For example, the original control strategy of a microgrid has to be revised to the proposed method, which is hard to apply as microgrids may belong to different owners and adopt various control strategies. These problems highlight the need for a distributed control strategy that is less influenced by the diversity.

In this paper, we investigate the distributed optimal power flow control among stand-alone DC microgrids, which is less influenced by their original control strategies. We construct an optimal power flow (OPF) model for stand-alone microgrids with an exact SOCP relaxation and further prove the uniqueness of its optimal solution. By using the primal-dual decomposition method, a distributed dynamic algorithm is proposed, which provides control commands for different control strategies such as power control, voltage control and droop control. This implies that we do not change the original control schemes of microgrids, which breaks the restriction of microgrid diversity. In addition, constraints of generation capacity limits and voltage limits are enforced even in the transient process of control commands, which implies power commands are always feasible and voltage commands are safe for converters. In this regards, it increases the security of DC system. Furthermore, we also prove the convergence of the algorithm and optimality of the equilibrium point.
The contributions of this paper have following aspects:
\begin{itemize}
	\item The OPF model of stand-alone DC power system is formulated, and the uniqueness of its optimal solution is proved. 
	\item A fully distributed algorithm is proposed to achieve the optimal solution of the OPF problem, where only communications with neighbors are needed with minimal communication burden.
	\item The proposed method does not change the original control strategy of each DG, which adapts to three most common control modes: power control, voltage control and droop control, breaking restriction of microgrid diversity. 
	\item The control commands for power controlled and voltage controlled microgrids satisfy generation limits and voltage limits in both transient process and steady state. 
\end{itemize}

The rest of the paper is organized as follows. In Section II, network model of DC microgrids is introduced. In Section III, OPF model for stand-alone DC microgrids is formulated. In Section IV, the dynamic algorithm is proposed, the optimality and convergence of the algorithm are proved theoretically. The implementation approach is designed in Section V. Case studies are given in Section VI. Finally, Section VII concludes the paper.

%\wang{breaking into two sections, the number of Lemma Assumption}

\section{Network Model}

A stand-alone DC system is composed of a cluster of microgrids connected by lines. Each microgrid is treated as a bus with generation and load. Then the whole system is modeled as a connected graph $\mathcal{G}:=(\mathcal N, \mathcal E)$, where  $\mathcal N=\{1,2,...n\}$ is the set of microgrids and $\mathcal E\subseteq \mathcal N\times \mathcal N$ is the set of lines. If two microgrids $i$ and $k$ are connected by a tie line directly, we denote $(i,k)\in \mathcal E$, and abbreviated by $i\sim k$. The resistance of line $(i,k)$ is $r_{ik}$. The power flow from microgrid $i$ to microgrid $k$ is $P_{ik}$, and the current from microgrids $i$ to $k$ is $I_{ik}$. Let $m:= |\mathcal E|$ be the number of lines. 

For each microgrid $i\in\mathcal N$, let $p_i^g(t)$ denote the generation at time $t$ and $p_i^d$ denotes its constant load demand. Denote the voltage at bus $i$ as $V_i$. DGs in the DC microgrids may have different control strategies, such as power control, voltage control and droop control.  Power control and voltage control only require their reference values, which are not introduced here in detail. Droop control takes the form:
\begin{align}
	v_i-v_i^*=-k_i(p^g_i-\hat p_i).
	\label{droop control}
\end{align}
where $v_i=V_i^2$, $k_i>0$ is the droop coefficient, $v_i^*$ is the voltage square reference, and $k_i, v_i^*$ are constants. $\hat p_i$ is the power when $v_i =v_i^*$, which is a variable in the rest of the paper. 

Denote the current in line $(i,k)$ from $i$ to $k$ as $I_{ik}$, which is defined  
\begin{align}
	I_{ik}=(V_i-V_k)/r_{ik}
\end{align}
Then the power $P_{ik}$ from $i$ to $k$ is 
\begin{align}
   P_{ik}=V_iI_{ik}=V_i(V_i-V_k)/r_{ik}
\end{align}
Consequently, the power balance in one node is
\begin{subequations}
	\begin{align}
		p_i^g - p_i^d &= V_i \sum\nolimits_{k:k \in {N_i}} (V_i-V_k)/r_{ik} \label{power flow1}\\
					  &= \sum\nolimits_{k:k \in {N_i}} {{P_{ik}}} .
		\label{power flow2}
	\end{align}
where $N_i$ is the set of microgrids connected with microgrid $i$ directly. 
\end{subequations}
Our goal is to provide control commands for microgrids adopting different control strategies, which must satisfy the operational constraints:
\begin{subequations}
	\begin{align}
		0 &\le p_i^g \le \overline p_i^g	
		\label{generation constraints}	\\
		\underline V_i &\le V_i \le \overline V_i
		\label{voltage constraints}		
	\end{align}
	where $\overline p_i^g$ is the upper limit of generation in DG $i$, $\underline V_i, \overline V_i$ are lower and upper limits of voltage. For power controlled DG, \eqref{generation constraints} is a hard limit, which must be satisfied  even during the transient process. Otherwise it is non-executable. For voltage controlled DG, (\ref{voltage constraints}) is not a hard limit, but it also should be satisfied during the transient. This is because overlimit voltage is not secure for the converter nor the operator.
	\label{operation constraints}
\end{subequations}

\section{Optimal Power Flow Problem}

\subsection{OPF Model}
Existing OPF models are mainly for grid-connected DC system \cite{Gan:Optimal}. However, DC microgrids also operate in isolated mode in many situations such as in remote areas or islands. In terms of this, we formulate the ordinary OPF in the stand-alone DC power system. 
\begin{subequations}
	\begin{eqnarray}\text{OPF:~}
	\mathop {\min }\limits_{p^g,v,W} & \!\!\!\!\!\! & f=\sum\nolimits_{i \in {\cal N}} {f_i(p_i^g)} 
	\label{initial objective function0} \\
%	\text{over} & \!\!\!\!\!\! & x := (p^g,v,W)\nonumber\\
	\text{s. t.} & \!\!\!\!\!\! & 
	(\text{\ref{generation constraints}})\nonumber\\
	& \!\!\!\!\!\! &{\underline{V}_i}^2 \le {v_i} \le {{\overline V}_i}^2,\quad i \in {\cal N}
	\label{voltage square constraints0}\\
	& \!\!\!\!\!\! &
	p_i^g - p_i^d= \sum\limits_{k:k\sim i} (v_i-W_{ik})/r_{ik}, \ i \in {\cal N}
	\label{power injection}\\
	& \!\!\!\!\!\! & W_{ik}\ge0, \quad i\sim k
	\label{W positive}\\
	& \!\!\!\!\!\! & W_{ik}=W_{ki}, \quad i\sim k
	\label{line W}
	\\
	& \!\!\!\!\!\! &R_{ik} \succeq 0, \quad i\sim k
	\label{R Positive definite}
	\\
	& \!\!\!\!\!\! &\text{rank}(R_{ik})=1, \quad i\sim k
	\label{rank constraints0}
	\end{eqnarray}
	where $R_{ik}=\left[ {\begin{array}{*{20}{c}}
		v_i    &   W_{ik}		\\
		W_{ki} &   v_j
		\end{array}} \right]$.  
	\label{initial optimization problem0}
\end{subequations} If $\text{rank}(R_{ik})=1$ always holds, $W_{ik}$ can be divided into  $W_{ik}=V_iV_k$. 
The cost \eqref{initial objective function0} is a function of generation in each node, which should satisfy 
\begin{itemize}
	\item[\textbf{A1}:] $f_i(p_i^g)$ is strictly increasing when $p_i^g\ge -p_i^d$ for $i\in \cal N$, second order continuously differentiable and strongly convex 	$(f_i^{''}(p_i^g)\ge \alpha >0)$.
\end{itemize}
Constraint (\ref{voltage square constraints0}) is derived from (\ref{voltage constraints}), and (\ref{power injection}) is from (\ref{power flow1}). Constraint (\ref{R Positive definite}) implies matrix $R_{ik}$ is positive semi-definite, and (\ref{rank constraints0}) guarantees the rank of $R_{ik}$ be 1.
The difference between (\ref{initial optimization problem0}) and $\text{OPF}^{'}$ in \cite{Gan:Optimal} is that there is no substation node with fixed voltage in (\ref{initial optimization problem0}). (\ref{initial optimization problem0}) is not convex due to constraint (\ref{rank constraints0}). Remove (\ref{rank constraints0}), and we get the SOCP relaxation of (\ref{initial optimization problem0}). 
\begin{eqnarray} \text{SOCP:~}
	\mathop {\min }\limits_{p^g,v,W} & \!\!\!\!\!\! & \sum\nolimits_{i \in {\cal N}} {f_i(p_i^g)} \nonumber	\\
%	\text{over} & \!\!\!\!\!\! & x := (p^g,v,W)\nonumber\\
	\text{s. t.} & \!\!\!\!\!\! & (\text{\ref{generation constraints}}), (\text{\ref{voltage square constraints0}})- (\text{\ref{R Positive definite}})\nonumber
\end{eqnarray}
It has been proved in \cite{Li:On} that the relaxation is exact provided that: 1) $\overline V_1=\overline V_2=\dots=\overline V_n$; 2) $p_i^d>0$; 3) $\sum\nolimits_{i\in \cal N} (p_i^g - p_i^d)>0$; 4) $f_i(p_i^g)$ is strictly increasing when $p_i^g\ge p_i^d$ for $i\in \cal N$.
%\begin{enumerate}
%	\item $\overline V_1=\overline V_2=\dots=\overline V_n$;
%	\item $p_i^d>0$;
%	\item $\sum\nolimits_{i\in \cal N} (p_i^g - p_i^d)>0$;
%	\item $f_i(p_i^g-p_i^d)$ is strictly increasing for $i\in \cal N$.
%\end{enumerate}
In this paper, conditions 1), 2), 3) are satisfied.  With assumption A1, 4) is also satisfied.

To improve the numerical stability of the \text{SOCP}, we have the following stable SOCP problem. 
\begin{subequations}
	\begin{eqnarray}\text{SSOCP:~}
		\mathop {\min }\limits_{p^g,P,l,v} & \!\!\!\!\!\! & \sum\nolimits_{i \in {\cal N}} {f_i(p_i^g)} 
		\label{initial objective function} \\
%		\text{over} & \!\!\!\!\!\! & x := (p^g,P,l,v)\nonumber\\
		\text{s. t.} & \!\!\!\!\!\! & 
		(\text{\ref{power flow2}}),(\text{\ref{generation constraints}})\nonumber\\
%		v_i-v_i^*=-k_i(p^g_i-\hat p_i)\\
%        & \!\!\!\!\!\! & p_i^g - p_i^d = \sum\limits_{k:k \in {N_i}} {{P_{ik}}} \\
		& \!\!\!\!\!\! & {P_{ik}} + {P_{ki}} = {r_{ik}}{l_{ik}},\quad  i\sim k
		\label{network loss}\\
		& \!\!\!\!\!\! & {v_i} - {v_k} = {r_{ik}}\left( {{P_{ik}} - {P_{ki}}} \right),\ i\sim k
		\label{voltage relation}
		\\
		& \!\!\!\!\!\! &{l_{ik}} \ge {{P_{ik}^2}}/{{{v_i}}},\quad i\sim k
		\label{current relation}
		\\
		& \!\!\!\!\!\! &{\underline{V}_i}^2 \le {v_i} \le {{\overline V}_i}^2,\quad i \in {\cal N}
		\label{voltage square constraints}
	\end{eqnarray}
	where $l_{ik}=|I_{ik}|^2$ are squared line currents, and $l_{ik}=l_{ki}$. Constraint (\ref{current relation}) is the SOCP relaxed form. The detailed explanation of (\ref{network loss})-(\ref{current relation}) is found in \cite{Gan:Optimal}, which is omitted here.
	\label{initial optimization problem}
\end{subequations}

According to Theorem 5 in \cite{Gan:Optimal}, \text{SOCP} and \text{SSOCP} are equivalent, i.e., there exists a one-to-one map between the feasible set of \text{SOCP} and the feasible set of \text{SSOCP}, which is 
\begin{align}
	\label{map}
	P_{ik}&=(v_i-W_{ik})/r_{ik},\quad i\sim k; \nonumber\\
	l_{ik}&=(v_i-W_{ik}-W_{ki}+v_k)/r_{ik}^2,\quad i\sim k.
\end{align}

In some microgrids, droop control is utilized. However, the solution of (\ref{initial optimization problem}) cannot guarantee $v_i$ and $p_i^g$ satisfy (\ref{droop control}), which implies that the optimal solution may not be achieved in reality. In this regard, we add droop control to the constraints, then the problem becomes	
\begin{eqnarray}\text{DSOCP:~}
	\mathop {\min }\limits_{p^g,P,l,v,\hat p} & \!\!\!\!\!\! & \sum\nolimits_{i \in {\cal N}} {f_i(p_i^g)}  \nonumber
	\label{droop objective function} \\
%	\text{over} & \!\!\!\!\!\! & x := (p^g,P,l,v,\hat p)\nonumber\\
	\text{s. t.} & \!\!\!\!\!\! & 
	(\text{\ref{droop control}}), (\text{\ref{power flow2}}), (\text{\ref{generation constraints}}), (\text{\ref{network loss}})- (\text{\ref{voltage square constraints}}) \nonumber
\end{eqnarray}
In \text{DSOCP}, the droop coefficient is a constant, while $\hat p$ is an optimization variable, making DSOCP a convex problem.

To help design the algorithm, an equivalent optimization problem is formulated. 
\begin{eqnarray}\text{ESOCP:~}
	\mathop {\min }\limits_{p^g,P,l,v,\hat p} & \!\!\!\!\!\! & \sum\limits_{i \in {\cal N}} {f_i(p_i^g)}  + \sum\limits_{i \in {\cal N}} {\frac{1}{2}y_i^2} + \sum\limits_{i \in {\cal N}} {\frac{1}{2}z_i^2}
	\label{revised objective function} \nonumber\\
%	\text{over} & \!\!\!\!\!\! & x := (p^g,P,l,v,\hat p)\nonumber\\
	\text{s. t.} & \!\!\!\!\!\! & 
	(\text{\ref{droop control}}), (\text{\ref{power flow2}}), (\text{\ref{generation constraints}}), (\text{\ref{network loss}})- (\text{\ref{voltage square constraints}})\nonumber
\end{eqnarray}
	where $y_i=v_i+k_ip^g_i-v_i^*-k_i\hat p_i$, $z_i={{{ {p_i^g - p_i^d - \sum\nolimits_{k:k \in {N_i}} {{P_{ik}}} } }}}$. 
	
Since for any feasible solution of ESOCP we all have $y_i=z_i=0$, ESOCP is equivalent to DSOCP. $y_i$ and $z_i$ are only put here to accelerate the convergence of the algorithm \cite{Feijer:Stability}. 

\subsection{Uniqueness of Optimal Solution}
Before introducing the results, we give an assumption.
\begin{itemize}
	\item[\textbf{A2}:] The OPF (\ref{initial optimization problem0}) is feasible.
\end{itemize}
Then, we have the following theorem. 
\begin{theorem}
	\label{uniqueness}
	Suppose A1 and A2 hold. The optimal solution of SSOCP is unique.
\end{theorem}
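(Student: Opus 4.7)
The plan is to split the argument into two parts: first nail down $p^g$ using strong convexity of the objective, then propagate uniqueness to the remaining variables $(P,l,v)$ via the exactness of the SOCP relaxation and the one-to-one correspondence with SOCP.

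First I would verify that SSOCP is a convex program: the constraints (\ref{power flow2}), (\ref{generation constraints}), (\ref{network loss}), (\ref{voltage relation}), (\ref{voltage square constraints}) are affine, the constraint (\ref{current relation}) has the form $v_i l_{ik} \ge P_{ik}^2$ with $v_i>0$, which defines a rotated second-order cone and is therefore convex, and by A1 the objective $\sum_i f_i(p_i^g)$ is strongly convex in $p^g$. Assumption A2 plus the equivalence of SSOCP with SOCP guarantees the feasible set is non-empty, so a minimizer exists. Now suppose $(p^{g,1},P^1,l^1,v^1)$ and $(p^{g,2},P^2,l^2,v^2)$ are both optimal. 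By convexity of the feasible set, the midpoint is also feasible; by strong convexity of the cost restricted to $p^g$, if $p^{g,1}\neq p^{g,2}$ the midpoint would strictly improve the cost, contradicting optimality. Hence $p^{g,1}=p^{g,2}=:p^{g*}$.

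Second, I would transport the problem to the SOCP formulation using the bijection (\ref{map}), so that it is enough to show that any optimal SOCP solution is unique in $(v,W)$ once $p^{g*}$ is fixed. Here the conditions 1)–4) of \cite{Li:On} hold (uniform upper voltage bounds, strictly positive $p_i^d$, $\sum_i(p_i^g-p_i^d)>0$ by losses, strictly increasing $f_i$ by A1), so the SOCP relaxation is exact: every optimum satisfies $\operatorname{rank}(R_{ik})=1$, which together with $W_{ik}\ge 0$ forces $W_{ik}=\sqrt{v_iv_k}=V_iV_k$ with $V_i:=\sqrt{v_i}\ge 0$. Substituting into (\ref{power injection}) reduces the problem to the standard resistive DC power-flow system
\begin{equation*}
p_i^{g*}-p_i^d \;=\; V_i\sum_{k:k\sim i}\frac{V_i-V_k}{r_{ik}},\qquad i\in\mathcal N,
\end{equation*}
on the box $\underline V_i\le V_i\le\overline V_i$.

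The main obstacle is therefore the uniqueness of this power-flow solution in the feasible voltage box; once $V$ is unique, $v$, $W$, and hence $P$, $l$ via (\ref{map}) are all determined. I would handle this by showing that the Jacobian of the map $V\mapsto p$ is non-singular (in fact a diagonally dominant, positive-definite-like matrix) throughout the high-voltage regime $V_i\ge\underline V_i>0$, so the map is locally a diffeomorphism, and then use a monotonicity or homotopy argument on the connected feasible box to upgrade this to global injectivity. Combined with the two previous steps, this shows that $p^{g*}$, $v^{*}$, and finally $(P^{*},l^{*})$ are all uniquely determined, completing the proof.
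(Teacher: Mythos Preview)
Your first step---using the strong convexity of $\sum_i f_i$ in $p^g$ together with convexity of the feasible set to pin down the optimal generation vector $p^{g*}$ uniquely---is correct and in fact cleaner than what the paper does; the paper never isolates $p^{g*}$ this way and instead relies only on the strictly-increasing part of A1 at a later stage.

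The genuine gap is in your second step. Your claim that the Jacobian of the map $V\mapsto p$ is diagonally dominant on the whole box $V_i\ge\underline V_i$ does not hold. A direct computation gives
\[
J_{ii}=\sum_{k\sim i}\frac{2V_i-V_k}{r_{ik}},\qquad J_{ij}=-\frac{V_i}{r_{ij}}\ (j\sim i),
\]
so that $J_{ii}-\sum_{j\neq i}|J_{ij}|=\sum_{k\sim i}(V_i-V_k)/r_{ik}$, which is precisely the net current injection at bus $i$. This is negative at every bus that is a net consumer, so row diagonal dominance fails there; the symmetric part of $J$ is likewise not positive definite throughout the feasible box. A global-injectivity argument for the islanded DC power-flow map (with no slack bus fixing a voltage) is therefore not a consequence of the local Jacobian considerations you sketch, and the homotopy/monotonicity outline is too vague to close the gap.

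The paper avoids this difficulty altogether. It invokes a structural result from the proof of Theorem~3 in \cite{Gan:Optimal}: any two optimal solutions $(v^{1*},W^{1*})$ and $(v^{2*},W^{2*})$ of the SOCP must be scalar multiples of each other, namely $v^{1*}=\eta\, v^{2*}$ and $W^{1*}=\eta\, W^{2*}$ for some $\eta>0$. Substituting into (\ref{power injection}) yields $p_i^{g1*}-p_i^d=\eta\,(p_i^{g2*}-p_i^d)$ for every $i$. Either your first step (both sides share the same $p^{g*}$, with $\sum_i(p_i^{g*}-p_i^d)>0$ by the standing assumption) or the paper's appeal to strict monotonicity of the cost then forces $\eta=1$. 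Once this one-parameter scaling freedom is eliminated, uniqueness of $(v,W)$ and hence of $(P,l)$ via the bijection (\ref{map}) follows immediately---no power-flow injectivity argument is needed. If you want to repair your proof while keeping your clean first step, simply replace the Jacobian paragraph by this scaling observation.
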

The proof of Theorem \ref{uniqueness} is given in Appendix A.

%of (\ref{initial optimization problem}) and (\ref{droop optimization problem}), we have the following Theorem.
\begin{theorem}
	\label{Th:same solution}
	Denote the optimal solution of SSOCP as $x^{1*}=(p^{g1*},P^{1*},l^{1*},v^{1*})$ and the optimal solution of DSOCP as $x^{2*}=(p^{g2*},P^{2*},l^{2*},v^{2*},\hat p^{2*})$. Then,
	\begin{enumerate}
		\item 	there exists an unique $\hat p^{2*}$ making $(x^{1*},\hat p^{2*})$ the optimal solution of DSOCP;
		\item the optimal solution of DSOCP is unique;
		\item   $(p^{g2*},P^{2*},l^{2*},v^{2*})=(p^{g1*},P^{1*},l^{1*},v^{1*})$.
	\end{enumerate}
\end{theorem}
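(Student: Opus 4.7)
The plan is to exploit the fact that DSOCP differs from SSOCP only by the addition of the droop constraint \eqref{droop control}, and that the auxiliary variable $\hat p_i$ appears nowhere else in DSOCP (neither in the objective nor in any other constraint). Since $k_i>0$, the droop equation $v_i-v_i^*=-k_i(p_i^g-\hat p_i)$ can be solved uniquely for $\hat p_i$ in terms of $(p_i^g,v_i)$, namely $\hat p_i = p_i^g+(v_i-v_i^*)/k_i$. This sets up a projection/lift argument between the feasible sets of the two problems.

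First, I would establish that the projection $\pi:(p^g,P,l,v,\hat p)\mapsto(p^g,P,l,v)$ sends the feasible set of DSOCP onto the feasible set of SSOCP, and conversely that any SSOCP-feasible $(p^g,P,l,v)$ lifts uniquely to a DSOCP-feasible tuple via $\hat p_i := p_i^g+(v_i-v_i^*)/k_i$. Both directions are immediate: the forward direction drops constraint \eqref{droop control} and the remaining constraints coincide with those of SSOCP; the backward direction uses the explicit formula for $\hat p_i$ to satisfy \eqref{droop control} while leaving all other constraints untouched since they do not involve $\hat p$.

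Next, because the DSOCP objective \eqref{initial objective function} depends only on $p^g$, the optimal value of DSOCP equals that of SSOCP, and the set of optimal $(p^g,P,l,v)$ components of DSOCP coincides with the set of optimal solutions of SSOCP. Invoking Theorem~\ref{uniqueness}, SSOCP has a unique optimizer $(p^{g1*},P^{1*},l^{1*},v^{1*})$; therefore any DSOCP optimum must share these components, proving part~3. Part~1 then follows by taking $\hat p_i^{2*}:=p_i^{g1*}+(v_i^{1*}-v_i^*)/k_i$, which is the unique lift compatible with \eqref{droop control}. Part~2 combines parts~1 and~3: the first four components of any DSOCP optimizer are forced to equal those of $x^{1*}$, and the last component is then forced by the droop equation to equal $\hat p^{2*}$.

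There is no real analytic obstacle here; the argument is essentially bookkeeping on the feasible sets. The only point that deserves care is ensuring that $\hat p_i$ is genuinely a free variable in DSOCP (it does not appear in the generation bound \eqref{generation constraints}, the voltage bound \eqref{voltage square constraints}, the power balance \eqref{power flow2}, the loss relation \eqref{network loss}, the voltage-drop relation \eqref{voltage relation}, or the SOCP inequality \eqref{current relation}), so that the lifting step introduces no extra restriction and inherits feasibility for free. Once that is verified, the equivalence and uniqueness claims reduce directly to Theorem~\ref{uniqueness}.
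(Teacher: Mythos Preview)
Your proposal is correct and follows essentially the same approach as the paper: both arguments rest on the observation that DSOCP differs from SSOCP only through the droop constraint \eqref{droop control}, that $\hat p$ appears nowhere else, and that $k_i>0$ allows a unique lift $\hat p_i=p_i^g+(v_i-v_i^*)/k_i$; uniqueness then follows directly from Theorem~\ref{uniqueness}. Your presentation is slightly more explicit in framing this as a projection/lift bijection between feasible sets, but the logical content is identical to the paper's proof.
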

The proof of Theorem \ref{Th:same solution} is given in Appendix A.

Suppose the optimal solution of DSOCP is $x^{2*}=(p^{g2*},P^{2*},$ $l^{2*},v^{2*},\hat p^{2*})$ with the droop coefficient $k_{i1}$. From the proof of 1) in Theorem \ref{Th:same solution}, if $k_{i1}$ changes to $k_{i2}$, there exists an unique $\hat p^{2*}(k_{i2})$ making $(p^{g2*},P^{2*},$ $l^{2*},v^{2*},\hat p^{2*}(k_{i2}))$ be the optimal solution of DSOCP. This implies that droop coefficient does not influence the optimal solution of SSOCP. 

\begin{lemma}
	\label{same feasible solution}
	Optimization problem ESOCP and DSOCP have identical feasible solutions.
\end{lemma}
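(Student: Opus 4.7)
The plan is to observe that the lemma is essentially immediate from inspection of the two problem statements. Both ESOCP and DSOCP are optimization problems over the same decision variables $(p^g, P, l, v, \hat p)$ and are subject to exactly the same list of constraints, namely \eqref{droop control}, \eqref{power flow2}, \eqref{generation constraints} and \eqref{network loss}--\eqref{voltage square constraints}. The only difference between the two problems lies in the objective function: ESOCP augments the DSOCP objective by $\sum_i \tfrac{1}{2}y_i^2 + \sum_i \tfrac{1}{2}z_i^2$. Since the feasible set of an optimization problem depends only on its constraints and not on its objective, the two feasible sets must coincide.

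The only subtlety to dispose of is whether $y_i$ and $z_i$ introduce any hidden variables or constraints. They do not: by their defining expressions $y_i = v_i + k_i p_i^g - v_i^* - k_i \hat p_i$ and $z_i = p_i^g - p_i^d - \sum_{k\in N_i} P_{ik}$, they are simply abbreviations for functions of the decision variables already present in DSOCP. In particular, at any feasible point the droop equation \eqref{droop control} forces $y_i = 0$, and the nodal power balance \eqref{power flow2} forces $z_i = 0$, so the extra penalty terms vanish identically on the common feasible set. This is mentioned in the paper to justify the equivalence of optimal solutions, but is not needed for the feasibility claim itself.

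Thus, the proof reduces to a single sentence: the constraint sets of ESOCP and DSOCP are identical by construction, hence their feasible sets are identical. There is no main obstacle — the only thing to be careful about is to emphasize that $y_i$ and $z_i$ are auxiliary expressions rather than additional decision variables or additional equality constraints, so that one is not misled into thinking ESOCP imposes more than DSOCP.
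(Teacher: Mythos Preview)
Your proposal is correct and matches the paper's own justification, which is just the one-line remark that $y_i=0$ and $z_i=0$ for any feasible solution. If anything, your observation that the constraint sets are literally identical by construction makes the claim even more immediate than the paper's phrasing suggests, since the vanishing of $y_i,z_i$ is only needed for equality of objectives, not of feasible sets.
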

It is easy to prove Lemma \ref{same feasible solution} as $y_i=0$ and $z_i=0$ for any feasible solution.

\begin{remark}
	From Theorem \ref{Th:same solution}, it is shown that the unique optimal solution still exists even if droop control is considered, and we can obtain the optimal $\hat p_2^*$ in droop control. Moreover, for these microgrids that do not adopt droop control, ESOCP can provide the optimal output power and voltage references. In addition, for different droop coefficient $k_i$, $(p_2^{g*},P_2^*,l_2^*,v_2^*)$ in the optimal solution does not change. Thus, for microgrids that do not adopt droop control, we can just assign an imaginary droop control to them when formulating ESOCP, i.e., assuming all the microgrids adopt droop control when building ESOCP. This does not influence the optimal solution of these microgrids adopting power control and voltage control. In this regard, our method adapts to three different control strategies.
\end{remark}

\section{Control Scheme Design}

\subsection{Distributed Algorithm}
Based on the primal-dual algorithm, we propose the following distributed approach to solve the ESOCP, which is
\begin{subequations}
	\label{algorithm}
	\begin{align}
		\dot p_i^g &=  \left[ {p_i^g - \left( {G_i(p_i^g) - {\mu _i} +k_i\epsilon_i + z_i+k_iy_i } \right)} \right]_0^{\overline p_i^g} - p_i^g 		
		\label{generation dynamics}\\
		{{\dot v}_i} & = \bigg [v_i - \bigg(y_i+ \sum\limits_{k \in {N_i}} {{\gamma _{ik}}} +\epsilon_i  -  \sum\limits_{k \in {N_i}} {{\rho _{ik}}\frac{{P_{ik}^2}}{{v_i^2}}}  \bigg)\bigg]_{{\underline{V}_i}^2}^{{\overline{V}_i}^2} -v_i 
		\label{voltage square dynamics}\\
		{{\dot P}_{ik}}& =  - \left( {{\mu _i} + {\lambda _{ik}} - {\gamma _{ik}}{r_{ik}} + 2{\rho _{ik}}{{{P_{ik}}}}/{{{v_i}}} - z_i} \right)
		\label{line power dynamics}		\\
		{{\dot l}_{ik}} &=  - \left( { - {\lambda _{ik}}{r_{ik}} - {\rho _{ik}} - {\rho _{ki}}} \right)
		\label{current square dynamics}\\
		\dot{\hat p}_i&=k_i\epsilon_i+k_iy_i \label{hat p dynamics}
		\\
		{{\dot \mu }_i} &=  -  \bigg( {p_i^g - p_i^d - \sum\limits_{k:k \in {N_i}} {{P_{ik}}} } \bigg)\label{mu dynamics}\\
		\dot\epsilon_i &= v_i+k_ip^g_i-v_i^*-k_i\hat p_i\\	
		{{\dot \lambda }_{ik}} &=   {{P_{ik}} + {P_{ki}} - {r_{ik}}{l_{ik}}}\\
		{{\dot \gamma }_{ik}} &=   {{v_i} - {v_k} - {r_{ik}}\left( {{P_{ik}} - {P_{ki}}} \right)}\label{gamma dynamics} \\
		\label{rho dynamics}
		{{\dot \rho }_{ik}} &=   \left[ {{{P_{ik}^2}}/{{{v_i}}} - {l_{ik}}} \right]_{{\rho _{ik}}}^ + 
	\end{align}
\end{subequations}
where  $G_i(p_i^g):=\frac{\partial}{\partial p_i^g}f_i(p_i^{g})$. 
For any $x_i, a_i, b_i$ with $a_i \le b_i$, $[x_i]_{a_i}^{b_i} := \min \{ b_i, \max \{ a_i, x_i \} \}$. Operator $[x_i]^+_{a_i}$  is
\begin{align}
	[x_i]^+_{a_i}&=\left\{
	\begin{array}{ll}
	x_i,& \text{if} \ a_i>0 \ \text{or} \  x_i>0;\\
	0,& \text{otherwise}.
	\end{array}
	\right.
	\nonumber
\end{align}

The algorithm (\ref{algorithm}) is fully distributed where each MG updates its internal states $p_i^g$, $P_{ik}, l_{ik}$, $v_i$, $\hat{p_i}$, $\mu _i$, $\epsilon_i$, $\lambda _{ik}$, $\gamma _{ik}$, $\rho _{ik}$ relying only on local information and neighboring information. The neighboring information only appear in variables $P_{ki},v_k,\rho_{ki},k\in N_i$. 

Next, we will investigate the boundedness of  $(p_i^g(t), v_i(t))$. Firstly we introduce the assumption
{\begin{itemize}
		\item[\textbf{A3}:] The initial states of the dynamic system (\ref{algorithm}) are finite, and $(p_i^g(0), v_i(0))$ satisfy constraint (\ref{operation constraints}).
\end{itemize}}
Define the set 
\begin{align}
X:=\left\{(p_i^g, v_i)|0\le p_i^g\le\overline p_i^g, \underline v_i\le v_i \le \overline v_i \right\},
\label{boundedness set}
\end{align}
then we will prove the boundedness property of  $(p_i^g(t), v_i(t))$.
\begin{lemma}
	Suppose assumption A3 holds. Then constraint (\ref{operation constraints}) is satisfied for all $t \ge 0$, i.e. $(p^g(t),v(t)) \in X$ for all $t \ge 0$ where $X $ is defined in (\ref{boundedness set}).
	\label{lemma transient boundedness}
\end{lemma}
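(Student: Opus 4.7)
The plan is to exploit the projection structure built into (\ref{generation dynamics}) and (\ref{voltage square dynamics}) to show that the box $X$ is forward invariant under the flow of (\ref{algorithm}). Each of these two equations has the scalar form $\dot x = [x - \xi(t)]_a^b - x$, where $\xi(t)$ gathers the remaining state variables that appear inside the clipping operator and $[a,b]$ is the corresponding feasibility interval: $[0,\overline p_i^g]$ for $p_i^g$ and $[\underline V_i^2, \overline V_i^2]$ for $v_i$. The right-hand side is therefore the difference between a value already confined to $[a,b]$ by construction and the current state $x$.

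The argument then reduces to a componentwise Nagumo-type subtangential check. First I would record the elementary fact that $[w]_a^b \in [a,b]$ for every $w\in\mathbb{R}$. Consequently, at the lower boundary $x=a$ one has $\dot x = [a-\xi]_a^b - a \ge 0$, and at the upper boundary $x=b$ one has $\dot x = [b-\xi]_a^b - b \le 0$. A standard first-exit-time argument combined with continuity of the solution then rules out leaving $[a,b]$: if $t^\ast$ were the earliest instant the trajectory crossed $b$ from below, the inequality $\dot x(t^\ast) \le 0$ would contradict the crossing, and symmetrically at $a$. Applying this argument coordinatewise to every $p_i^g$ and every $v_i$, and invoking A3 so that $(p^g(0), v(0)) \in X$, one obtains $(p^g(t), v(t)) \in X$ for all $t \ge 0$.

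The main obstacle is a minor technical one: the vector field in (\ref{algorithm}) is not $C^1$ because of the clipping operators $[\cdot]_a^b$ and the nonnegative projection $[\cdot]_{\rho_{ik}}^+$, so classical ODE theory does not apply directly at the boundaries. However, both projections are $1$-Lipschitz in the state and the remaining terms are smooth, so the right-hand side is Lipschitz on bounded sets and a unique Carath\'eodory solution exists locally from any finite initial condition furnished by A3. The invariance argument above prevents finite-time escape of the $(p^g, v)$ coordinates, and standard continuation then extends the solution to $[0,\infty)$, completing the claim.
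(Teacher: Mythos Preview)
Your argument is correct and rests on the same observation as the paper: each projected equation has the form $\dot x = u - x$ with the ``input'' $u=[x-\xi]_a^b$ forced by construction to lie in $[a,b]$, so the interval is forward invariant once the initial state is inside it. The paper packages this as the standard property of a first-order lag (an ``inertia link'' in control-engineering parlance): a stable first-order filter driven by a signal in $[a,b]$ and initialized in $[a,b]$ never leaves $[a,b]$. You instead verify it directly via the Nagumo subtangent condition at the two endpoints and a first-exit-time contradiction. The two routes are equivalent here; yours is a bit more self-contained for readers unfamiliar with the inertia-link language, and your added remarks on Lipschitz continuity of the clipped vector field and existence of Carath\'eodory solutions fill in well-posedness details that the paper leaves implicit.
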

\begin{proof}[Proof of Lemma \ref{lemma transient boundedness}]
	Note that (\ref{generation dynamics}) is an inertia link with input 
	$$u^g_i=\left[ {p_i^g - \left( {G_i(p_i^{g}) - {\mu _i} +k_i\epsilon_i + z_i+k_iy_i } \right)} \right]_0^{\overline p_i^g}$$
	According to the feature of inertia link, $p^g(t)\in X$ for all $t \ge 0$ holds as long as $u^g_i(t)\in X$ for all $t \ge 0$. Thus, we know $p^g(t)\in X$ for all $t \ge 0$ always holds. Similarly, $v(t)\in X$ for all $t \ge 0$ always holds. This completes the proof.
\end{proof}
Lemma \ref{lemma transient boundedness} implies that inequality constraints are enforced even in the transient for $p_i^g$ and $v_i$.

%\section{Optimality and Convergence}
\subsection{Optimality of Equilibrium Point}
In this subsection, we will prove that the equilibrium points of (\ref{algorithm}) are primal-dual optimal for ESOCP and its dual, and vice versa. Firstly, the definition of equilibrium points of (\ref{algorithm}) and the optimal solution of ESOCP are given in Definition 1 and Definition 2 respectively. 

Given $x_p := (p^g, P,  l, v, \hat{p})$, $ x_d := (\mu, \epsilon, \lambda , \gamma , \rho )$, two definitions are introduced.
\begin{definition}
	\label{def:ep.2}
	A point $(x_p^*, x_d^*) := (p^{g*}, P^*, l^*, v^*, \hat{p}^*, \mu^*, \epsilon^*,$ $ \lambda^*,$ $\gamma^* , $ $\rho^* )$ is an equilibrium point of (\ref{algorithm}) if the right-hand side of (\ref{algorithm}) vanishes at $(x_p^*, x_d^*)$.
\end{definition}

\begin{definition}
	A point $(x_p^*, x_d^*)$ is primal-dual optimal if $x^*_p$ is optimal 
	for ESOCP and $x_d^*$ is optimal for its dual problem.
\end{definition}

To prove the optimality of $(x_p^*, x_d^*)$, we make the following assumption:
{\begin{itemize}
	\item[\textbf{A4}:] Slater's condition for ESOCP holds.
\end{itemize}}

We first illustrate that the saturation of controller does not influence the optimal solution of ESOCP, which is introduced in Lemma \ref{lemma saturation optimal solution}.
\begin{lemma}
	\label{lemma saturation optimal solution}
	Suppose A1, A2 and A4 hold. If $(x_p^*, x_d^*)$ is primal-dual optimal, we have 
	\begin{align}
		p_i^{g*} &= \left[ {p_i^{g*} - \left( {G_i(p_i^{g*}) - {\mu _i^*} +k_i\epsilon_i^* + z_i^*+k_iy_i^* } \right)} \right]_0^{\overline p_i^g}  \nonumber\\
		v_i^* &=\bigg [v_i^* - \bigg(y_i^*+ \sum\limits_{k \in {N_i}} {{\gamma _{ik}^*}} +\epsilon_i^*  
		- \sum\limits_{k \in {N_i}} {{\rho^* _{ik}}\frac{{(P_{ik}^*)^2}}{{(v_i^*)^2}}}  \bigg)\bigg]_{{\underline{V}_i}^2}^{{\overline{V}_i}^2}\nonumber
	\end{align}
\end{lemma}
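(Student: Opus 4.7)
The plan is to exploit the standard variational fixed-point characterization of constrained convex minimization: for a differentiable convex function $L$ over a box $[a,b]$, a point $x^*$ minimizes $L$ iff $x^* = [x^* - \nabla L(x^*)]_a^b$. Since ESOCP is convex and Slater's condition (A4) holds, strong duality yields a saddle point $(x_p^*, x_d^*)$ of the Lagrangian $L(x_p, x_d)$. The key observation is that the algorithm \eqref{algorithm} dualizes the equality constraints \eqref{droop control}, \eqref{power flow2}, \eqref{network loss}, \eqref{voltage relation} and the conic inequality \eqref{current relation} with the multipliers $\epsilon_i,\mu_i,\lambda_{ik},\gamma_{ik},\rho_{ik}$ respectively, but keeps the box constraints \eqref{generation constraints} and \eqref{voltage square constraints} as explicit primal constraints (this is why only \eqref{generation dynamics} and \eqref{voltage square dynamics} carry projection brackets). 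By the saddle-point property, $x_p^*$ minimizes $L(\,\cdot\,, x_d^*)$ over precisely these box constraints.

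First I would write out the Lagrangian explicitly and compute the partial derivatives at the optimum. Differentiating in $p_i^g$, the objective contributes $G_i(p_i^{g*})$ from $f_i$, together with $k_i y_i^*$ from $\tfrac{1}{2}y_i^2$ (since $\partial y_i/\partial p_i^g = k_i$) and $z_i^*$ from $\tfrac{1}{2}z_i^2$ (since $\partial z_i/\partial p_i^g = 1$); the droop multiplier gives $k_i\epsilon_i^*$ and the power-balance multiplier contributes $-\mu_i^*$ with the sign convention used in \eqref{mu dynamics}. Summing,
\begin{equation*}
\nabla_{p_i^g} L(x_p^*,x_d^*) = G_i(p_i^{g*}) - \mu_i^* + k_i\epsilon_i^* + z_i^* + k_iy_i^*,
\end{equation*}
exactly the quantity appearing in the lemma. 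An analogous calculation for $v_i$ yields
\begin{equation*}
\nabla_{v_i} L(x_p^*,x_d^*) = y_i^* + \epsilon_i^* + \sum_{k\in N_i}\gamma_{ik}^* - \sum_{k\in N_i}\rho_{ik}^*\,\frac{(P_{ik}^*)^2}{(v_i^*)^2},
\end{equation*}
where the last term comes from differentiating the relaxed conic constraint $P_{ik}^2/v_i - l_{ik}\le 0$ weighted by $\rho_{ik}^*$.

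Second, I would apply the projected-gradient fixed-point characterization componentwise. Since $L(\,\cdot\,,x_d^*)$ is convex in $x_p$ (ESOCP is a convex program) and the feasible region for $p_i^g$ in the unreduced problem is the interval $[0,\overline{p}_i^g]$, the minimizer $p_i^{g*}$ must satisfy
\begin{equation*}
p_i^{g*} = \bigl[p_i^{g*} - \nabla_{p_i^g} L(x_p^*,x_d^*)\bigr]_0^{\overline{p}_i^g},
\end{equation*}
which is exactly the first identity of the lemma. The same reasoning with the interval $[\underline{V}_i^2,\overline{V}_i^2]$ yields the second identity. Equivalently, this step may be justified by cases on the KKT conditions: if $0 < p_i^{g*} < \overline{p}_i^g$ then stationarity forces $\nabla_{p_i^g} L = 0$ and the projection is trivial; if $p_i^{g*}$ sits at a boundary, KKT yields the correct sign on $\nabla_{p_i^g} L$ so that the unprojected argument lies on the correct side, and the clipping recovers the boundary value.

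The main obstacle is purely bookkeeping rather than conceptual: one must carefully track signs in the dual convention implicit in \eqref{algorithm} (in particular the $-\mu_i^*$ from the power-balance equality, and the sign of the conic-relaxation term contributing to $\nabla_{v_i} L$), and verify that the quadratic penalty terms $\tfrac{1}{2}y_i^2$ and $\tfrac{1}{2}z_i^2$ produce precisely the $k_iy_i^*, k_i\epsilon_i^*$--adjacent and $z_i^*$ contributions listed. Once the Lagrangian gradients are matched to the bracketed expressions in the lemma, the conclusion is an immediate consequence of the projection characterization of box-constrained convex minimization.
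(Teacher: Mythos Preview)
Your proposal is correct and follows essentially the same route as the paper: write down the (partial) Lagrangian that dualizes \eqref{droop control}, \eqref{power flow2}, \eqref{network loss}--\eqref{current relation} while leaving the box constraints \eqref{generation constraints}, \eqref{voltage square constraints} explicit, compute $\nabla_{p_i^g}L$ and $\nabla_{v_i}L$ at the optimum, and then read off the projection identities from the KKT optimality conditions for box-constrained minimization. The paper does the final step by the three-case KKT analysis you mention as the alternative justification, while you phrase it via the variational fixed-point characterization $x^*=[x^*-\nabla L(x^*)]_a^b$; these are equivalent formulations of the same argument.
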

The proof of Lemma \ref{lemma saturation optimal solution} is given in Appendix B.
Based on Lemma \ref{lemma saturation optimal solution}, we have the following Theorem.
\begin{theorem}
	\label{theorem optimality}
	Suppose A1, A2, A3 and A4 hold. A point $(x_p^*, x_d^*)$ is primal-dual optimal if and only if it is an equilibrium of the dynamic system (\ref{algorithm}).
\end{theorem}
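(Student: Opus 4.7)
The plan is to characterize primal-dual optimality of ESOCP through its KKT conditions, which are necessary and sufficient under A1 (strict convexity of $f_i$) and A4 (Slater's condition), and then to match them term by term with the equilibrium conditions of (\ref{algorithm}). Both directions of the theorem then follow at once, because an equilibrium is equivalent to KKT, and KKT is equivalent to primal-dual optimality.

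I would first form the Lagrangian of ESOCP by attaching multipliers $\mu_i$ to the power balance (\ref{power flow2}), $\epsilon_i$ to the droop equation (\ref{droop control}), $\lambda_{ik}$ to the loss equation (\ref{network loss}), $\gamma_{ik}$ to the voltage-drop equation (\ref{voltage relation}), and $\rho_{ik}\ge 0$ to the SOCP inequality (\ref{current relation}); the box constraints (\ref{generation constraints}) and (\ref{voltage square constraints}) are kept as explicit set constraints, mirroring the saturations in (\ref{generation dynamics}) and (\ref{voltage square dynamics}). Writing out the KKT system yields stationarity in $(p_i^g, v_i, P_{ik}, l_{ik}, \hat p_i)$, primal feasibility, $\rho_{ik}\ge 0$, and complementary slackness for (\ref{current relation}). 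Direct inspection then shows: the right-hand sides of (\ref{line power dynamics}), (\ref{current square dynamics}), and (\ref{hat p dynamics}) vanishing are exactly stationarity in $P_{ik}, l_{ik}, \hat p_i$; those of (\ref{mu dynamics})--(\ref{gamma dynamics}) vanishing are exactly primal feasibility of the four equality constraints; and the fixed-point condition $[P_{ik}^2/v_i - l_{ik}]^+_{\rho_{ik}}=0$ in (\ref{rho dynamics}) simultaneously encodes $\rho_{ik}\ge 0$, $l_{ik}\ge P_{ik}^2/v_i$, and complementary slackness.

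The delicate step is the two saturated primal dynamics. For the direction primal-dual optimal $\Rightarrow$ equilibrium, Lemma \ref{lemma saturation optimal solution} already supplies the fixed-point identities $p_i^{g*}=[p_i^{g*}-(\cdot)]_0^{\overline p_i^g}$ and $v_i^*=[v_i^*-(\cdot)]_{\underline V_i^2}^{\overline V_i^2}$, so $\dot p_i^g=0$ and $\dot v_i=0$ at $(x_p^*,x_d^*)$. For the converse, I would invoke the standard projected-gradient characterization: these fixed-point identities are equivalent to the existence of nonnegative box-constraint multipliers realising the KKT stationarity conditions in $p_i^g$ and $v_i$, which together with the already-verified conditions delivers KKT optimality, hence primal-dual optimality by convexity plus Slater. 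I expect the main obstacle to be precisely this case analysis: each of the four box bounds in (\ref{generation constraints}) and (\ref{voltage square constraints}) may or may not be active, and the correct sign of the corresponding box multiplier must be extracted from the projection in each case; once that bookkeeping is done, A1 and A4 close the equivalence and finish the proof.
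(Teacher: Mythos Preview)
Your proposal is correct and follows essentially the same route as the paper: both arguments match the equilibrium conditions of (\ref{algorithm}) term by term with the KKT system of ESOCP, invoke Lemma~\ref{lemma saturation optimal solution} for the saturated updates (\ref{generation dynamics})--(\ref{voltage square dynamics}) in the forward direction, and use the standard three-case projection characterization for the reverse. The only noteworthy difference is that, when verifying $\dot\rho_{ik}=0$ in the forward direction, the paper additionally appeals to exactness of the SOCP relaxation to force $(P_{ik}^*)^2/v_i^*=l_{ik}^*$; your argument (correctly) observes that primal feasibility $P_{ik}^2/v_i\le l_{ik}$ together with complementary slackness already makes $[P_{ik}^2/v_i-l_{ik}]^+_{\rho_{ik}}$ vanish, so the exactness step is not actually needed here. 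One small caveat: the fixed-point identity for $\rho_{ik}$ does not by itself force $\rho_{ik}\ge 0$ (a negative $\rho_{ik}$ with $P_{ik}^2/v_i\le l_{ik}$ also satisfies it); both you and the paper implicitly rely on the projection structure / initial conditions to keep $\rho_{ik}$ in the nonnegative orthant, which is fine but worth stating explicitly.
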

The proof of Theorem  \ref{theorem optimality} is given in Appendix B.

\begin{figure*}[!t]
	\centering
	\includegraphics[width=0.92\textwidth]{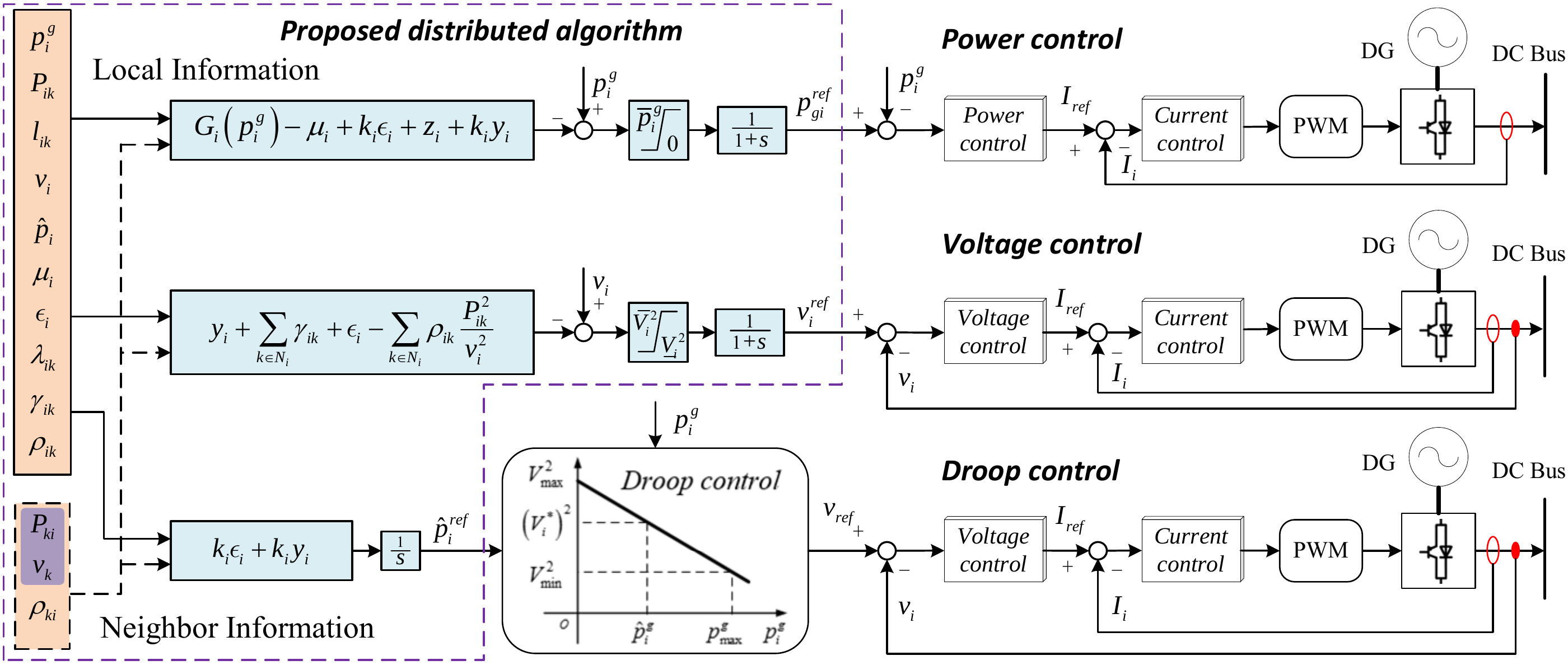}
	\caption{Control diagram of the proposed method}
	\label{control diagram}
\end{figure*}

\subsection{Convergence Analysis}

In this subsection, we will justify the convergence of the algorithm (\ref{algorithm}) by projection gradient theory combined with invariance principle for switched system. 

Define the sets $\sigma_\rho$ 
\begin{align}
\sigma_\rho &:= \left\{(i,k)\in \mathcal{E} \, | \, \rho_{ik}=0,    \, {{{P_{ik}^2}}/{{{v_i}}} - {l_{ik}}}<0
\right\} \nonumber	
\end{align}
Then (\ref{rho dynamics}) is equivalent to
\begin{align}
\dot\rho_{ik} &= \left\{
\begin{array}{ll}
\rho_{ik}({{P_{ik}^2}}/{{{v_i}}} - {l_{ik}}),
& \text{if}\ (i,k) \notin \sigma_\rho ;\\
0,
& \text{if}\ (i,k) \in \sigma_\rho .        \end{array}
\right.
\label{eq:rho}
\end{align}
From (\ref{eq:rho}), it is easy to know $\rho_{ik}(t)\ge0, \forall t$.

Denote $x:=(p_i^g, v_i, P_{ik}, l_{ik}, \hat{p}_i, \mu _i, \epsilon_i, \lambda _{ik}, \gamma _{ik}, \rho _{ik})$ and define $F(w)$ in a  fixed $\sigma_\rho$.
\begin{align}
	F\left( x \right) = \left[ \begin{array}{l}
	 {G_i(p_i^{g}) + k_i\epsilon_i - {\mu _i} + z_i + k_iy_i } \\
	 {y_i+\sum\limits_{k:k \in {N_i}} {{\gamma _{ik}}} +\epsilon_i  - \varphi _i^ -  + \varphi _i^ +  - \sum\limits_{k \in {N_i}} {{\rho _{ik}}\frac{{P_{ik}^2}}{{v_i^2}}} } \\
	 {{\mu _i} + {\lambda _{ik}} - {\gamma _{ik}}{r_{ik}} + 2{\rho _{ik}}\frac{{{P_{ik}}}}{{{v_i}}} - z_i} \\
	 { - {\lambda _{ik}}{r_{ik}} - {\rho _{ik}} - {\rho _{ki}}} \\
	 -k_i\epsilon_i-k_iy_i
	 \\
	 {p_i^g - p_i^d - \sum\limits_{k:k \in {N_i}} {{P_{ik}}} } \\
	- (v_i+k_ip^g_i-v_i^*-k_i\hat p_i)\\
	- ({{P_{ik}} + {P_{ki}} - {r_{ik}}{l_{ik}}}) \\
	- ( {{v_i} - {v_k} - {r_{ik}}\left( {{P_{ik}} - {P_{ki}}} \right)}) \\
	- \left[ {{{P_{ik}^2}}/{{{v_i}}} - {l_{ik}}} \right]_{{\rho _{ik}}}^ + 
	\end{array} \right]
\end{align}
$F(x)$ is continuously differentiable in a fixed $\sigma_\rho$.

We further define the set
%$$S:=X\times R^{2m+m+n+n+n+m+m+2m}$$
$$S:=X\times R^{7m+3n}$$
where $X$ is given in (\ref{boundedness set}). For any $x$, the projection $x-F(x)$ onto $S$ is 
\begin{align}
	{H}\left( x \right) := \text{Proj}_{S}\left( {x - F\left( x \right)} \right):=\arg\min_{y\in S} \| y - (x-F(x)) \|_2 \nonumber
\end{align}
where $\| \cdot \|_2$ is the Euclidean norm. Then, the algorithm (\ref{algorithm}) can be rewritten as 
\begin{align}
	\dot x(t) =  {{H}\left( x(t) \right) - x(t)} 
	\label{projection rewritten dynamics}
\end{align}
A point $x^*\in S$ is an equilibrium of (\ref{projection rewritten dynamics}) if and only if it is a fixed point of the projection:
$${{H}\left( x^* \right) = x^*} $$
Let $E:=\{\ x\in S\ |\ {{H}\left( x(t) \right) - x(t)}=0\ \}$ be the set of equilibrium points.

\begin{theorem}
	\label{theorem convergence}
	Suppose A1, A2, A3 and A4 hold. Then 
	every trajectory $x(t)$ of (\ref{projection rewritten dynamics}) starting from a finite initial state  asymptotically converges to some equilibrium $x^*\in E$ as $t\rightarrow +\infty$ that is optimal for problem ESOCP.	
\end{theorem}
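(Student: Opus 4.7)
The plan is to fix a primal-dual optimal $x^*$, build a quadratic Lyapunov function centered at $x^*$, use the non-expansiveness of the Euclidean projection onto $S$ together with monotonicity of $F$ to show $V$ is non-increasing, and then invoke an invariance principle tailored to the switching in (\ref{eq:rho}). Existence of $x^*$ follows from A1, A2, A4, and by Theorem \ref{theorem optimality} such an $x^*$ satisfies $H(x^*)=x^*$. From Lemma \ref{lemma transient boundedness}, $(p^g(t),v(t))\in X$ is already guaranteed, so the remaining task is to control the dual and auxiliary coordinates via the Lyapunov argument.

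Take $V(x):=\tfrac12\|x-x^*\|_2^2$. Within a fixed mode $\sigma_\rho$, the vector field $F$ is the saddle-point field of the (augmented) Lagrangian of ESOCP---the $y_i$ and $z_i$ terms only add a convex quadratic in the primal variables. Since ESOCP is convex (the only nonlinear inequality $l\ge P^2/v$ is a perspective/SOC constraint, convex in $(P,v,l)$, and $\rho_{ik}\ge 0$ along trajectories by (\ref{eq:rho})), the Lagrangian is convex-concave in $(x_p,x_d)$ and hence $F$ is monotone: $(x-y)^\top(F(x)-F(y))\ge 0$ on that mode. Using the variational characterization of the projection,
\begin{align*}
(y-H(x))^\top\bigl(x-F(x)-H(x)\bigr)\le 0,\quad \forall\, y\in S,
\end{align*}
applied at $y=x^*$, combined with the identity
\begin{align*}
(x-x^*)^\top(H(x)-x) = (H(x)-x^*)^\top(H(x)-x) - \|H(x)-x\|_2^2,
\end{align*}
and the equilibrium condition rewritten as the variational inequality $(z-x^*)^\top F(x^*)\ge 0$ for all $z\in S$ (applied at $z=H(x)$, plus monotonicity), I expect to collect everything into $\dot V(x)\le -\|H(x)-x\|_2^2\le 0$.

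Since $V$ is continuous and $x(t)$ is continuous in $t$, $V$ does not jump at switching instants of $\sigma_\rho$, so it is a common Lyapunov function for the switched system, and the sublevel set $\{V\le V(x(0))\}$ is a compact subset of $S$ containing the entire forward trajectory. Applying a LaSalle-type invariance principle for switched/hybrid systems then yields that every $\omega$-limit point $\bar x$ of $x(t)$ lies in the largest invariant subset of $\{\dot V=0\}$, which by the bound above forces $H(\bar x)=\bar x$, i.e.\ $\bar x\in E$. By Theorem \ref{theorem optimality}, $\bar x$ is primal-dual optimal for ESOCP. To upgrade $\omega$-limit set convergence to point convergence, I would re-center the Lyapunov function at $\bar x$: repeating the non-increase argument with $V_{\bar x}(x):=\tfrac12\|x-\bar x\|_2^2$ and using that $V_{\bar x}(x(t_n))\to 0$ along a subsequence (because $\bar x$ is an $\omega$-limit point) forces $V_{\bar x}(x(t))\to 0$, hence $x(t)\to \bar x$.

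The main obstacle I anticipate is justifying the invariance principle in the presence of the switching induced by $[\,\cdot\,]^+_{\rho_{ik}}$: although $\dot V\le 0$ on each mode and $V$ is continuous across mode-transitions, one must rule out Zeno/chattering behavior and verify that $V(x(\cdot))$ is absolutely continuous with $\dot V\le 0$ almost everywhere along the full switched trajectory. This typically reduces to showing that the switching times are locally finite on each bounded interval, or equivalently that the regime indicator $\sigma_\rho$ is piecewise constant along solutions---once this is in hand, the standard LaSalle argument and the single-point convergence step both go through unchanged.
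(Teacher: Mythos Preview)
Your overall strategy (common quadratic Lyapunov function, LaSalle, then re-centering) is close in spirit to the paper's, and the final single-point convergence step is essentially identical. But the central inequality you rely on,
\[
\dot V(x)\ \le\ -\|H(x)-x\|_2^2,
\]
does \emph{not} follow from the three ingredients you list. After the projection inequality at $y=x^*$ and your algebraic identity you get
\[
\dot V \ \le\ -\|H(x)-x\|_2^2 \;-\; \langle F(x),\,H(x)-x^*\rangle,
\]
and the equilibrium VI only gives $\langle F(x^*),\,H(x)-x^*\rangle\ge 0$. To conclude you would need $\langle F(x)-F(x^*),\,H(x)-x^*\rangle\ge 0$, but monotonicity controls $\langle F(x)-F(x^*),\,x-x^*\rangle$, not the same inner product with $H(x)-x^*$. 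In fact, for any $F$ with a nontrivial skew-symmetric part (exactly the situation for a saddle-point field, where the primal--dual coupling is antisymmetric), your bound fails already in the unconstrained case $S=\mathbb{R}^n$: with $F(x)=Ax$ and $A$ skew-symmetric one has $\dot V=0$ identically while $\|H(x)-x\|_2^2=\|Ax\|_2^2>0$. So the step ``collect everything into $\dot V\le -\|H(x)-x\|^2$'' is a genuine gap, and with it collapses your identification of the LaSalle set with $E$.

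This is precisely why the paper does \emph{not} use the bare quadratic $V$. It augments $V$ by the Fukushima regularized gap term,
\[
U(x)\;=\;-F(x)^\top(H(x)-x)\;-\;\tfrac12\|H(x)-x\|_2^2\;+\;\tfrac12\|x-x^*\|_2^2,
\]
whose derivative produces the additional negative quadratic $-(H(x)-x)^\top\nabla_xF(x)\,(H(x)-x)$; see (\ref{Lagrangian})--(\ref{derivative Lyapunov2}) and the matrix $Q$ in (\ref{Q}). That extra term is what buys the conclusion $\dot x=0$ on the invariant set: the paper exploits the specific block structure of $Q$ together with the $y_i,z_i$ augmentation in ESOCP (this is exactly why those otherwise ``useless'' quadratic penalties were added to the objective) to force first $\dot x_2=0$, then $\dot p^g=\dot v=\dot{\hat p}=\dot P=0$; see Lemma~\ref{lemma derivative of Lyapunov}, part~4 and (\ref{derivative Lyapunov3}). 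Your plain monotonicity argument cannot see this structure. If you want to keep $V=\tfrac12\|x-x^*\|^2$, you can still obtain $\dot V\le 0$ (via the saddle inequality, cf.\ (\ref{saddle property})), but then the characterization of $\{\dot V=0\}$ requires the same structural analysis the paper performs, not the shortcut $H(x)=x$.

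On the switching issue you flag: the paper does not try to rule out chattering directly. Instead it observes that $U$ can only drop (never jump up) at switches of $\sigma_\rho$, redefines $U$ by $\limsup$ at discontinuities to make it upper semicontinuous, and invokes the invariance principle for nonpathological Lyapunov functions of Bacciotti--Ceragioli. Your $V$ is continuous, so this particular difficulty would be milder for you, but it resurfaces because the correct Lyapunov function $U$ (or the correct analysis of $\dot V$) reintroduces the mode-dependent terms.
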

The proof of Theorem \ref{theorem convergence} is provided in  Appendix C.

\section{Implementation}
Each microgrid may adopt different control strategies: voltage control, power control or droop control. Different strategies require different control commands, which are $p_i^g, v_i$, $\hat{p}_i$ respectively. Since we combine optimization with real time control, values of $p_i^g, v_i$, $\hat{p}_i$ in the transient process are also sent to the corresponding DGs as the control commands. To distinguish with state variables $p_i^g, v_i$, $\hat{p}_i$, control commands sent to DGs are denoted as $p_{gi}^{ref}$, $v_i^{ref}$ and $\hat{p}_i^{ref}$ respectively.
For voltage and power control, the algorithm (\ref{algorithm}) can provide $p_{gi}^{ref}$ and $v_i^{ref}$ that are all feasible even in the transient process. For the droop control microgrids, we can supply $\hat{p}_i^{ref}$, which ensures the system operate in the optimal status. The control diagrams for three types of microgrid are shown in Fig.\ref{control diagram}.

In Fig.\ref{control diagram}, the left part is the proposed algorithm, the inputs of which are local information  $p_i^g$, $P_{ik}, l_{ik}$, $v_i$, $\hat{p_i}$ $\mu _i$, $\epsilon_i$, $\lambda _{ik}$, $\gamma _{ik}$, $\rho _{ik}$ and neighbor information $P_{ki}, v_k, \rho_{ki}, k\in N_i$. The outputs are $p_{gi}^{ref}$ , $v_i^{ref}$  and $\hat{p}_i^{ref}$. The right part is the diagrams of three control strategies: power control, voltage control and droop control. For power controlled DG, it has two control loops, power loop and current loop, where $I_{ref}$ is the current reference for the current control loop and $I_i$ is the measured current. For voltage controlled DG, it also has two control loops, power loop and current loop, where $v_{i}$ is the measured voltage. For droop controlled DG, it has three control loops, droop control loop, power loop and current loop, where both voltage and current need to be measured. 
%They receive different control commands, which are $p_{gi}^{ref}$ , $v_i^{ref}$  and $\hat{p}_i^{ref}$ respectively. 

From Fig.\ref{control diagram}, we can see that our method adapts to three commonly used control strategies, which in some sense implies it breaks restriction of various control strategies in microgrids in achieving optimal operation point. 

\begin{remark}
	In fact, for microgrid $i$, neighbor information $P_{ki}$ and $v_k$ can be estimated locally by the following equations
	$$P_{ki}=P_{ik}-r_{ik}I_{ik}^2$$
	$$v_k=(\sqrt{v_i}-r_{ik}I_{ik})^2$$
	where line current $I_{ik}$ from microgrid $i$ to $k$ can be measured locally. Then, only $\rho_{ki}, k\in N_i$ need to be exchanged between neighbors, which implies that the communication burden is minimized.
	
	In the real system, some microgrid may switch off or switch on unexpectedly. The system should also operate optimally in this situation. This requires the controller has the capability of plug-n-play, which will be shown in the simulation. 
\end{remark}

%\cite{Wang:Decentralized}

\section{Case Studies}
\subsection{Test System}

To verify the effectiveness of the proposed approach, a multi-microgrid DC system is utilized, the topology of which is based on the low voltage microgrid benchmark in \cite{Papathanassiou:A}. The system includes two feeders with six dispatchable DGs, which are divided into six microgrids based on corresponding DGs. The Breaker 1 is open, and the system operates in an isolated way. The simulation is performed in PSCAD joint with Matlab. More specifically, the DC microgrids are modeled in PSCAD, whereas algorithm (\ref{algorithm}) is computed in Matlab. They are combined by a user-defined interface. The control commands obtained in Matlab are sent to corresponding microgrids in PSCAD through the interface. Conversely, MATLAB can also collect data from PSCAD. In this regards, the PSCAD represents the physical system, while MATLAB represents the cyber system. Communication exists in cyber system to obtain the neighborhood information. The joint simulation flowchart is illustrated in Fig.\ref{interface}.

\begin{figure}[!t]
	\centering
	\includegraphics[width=0.5\textwidth]{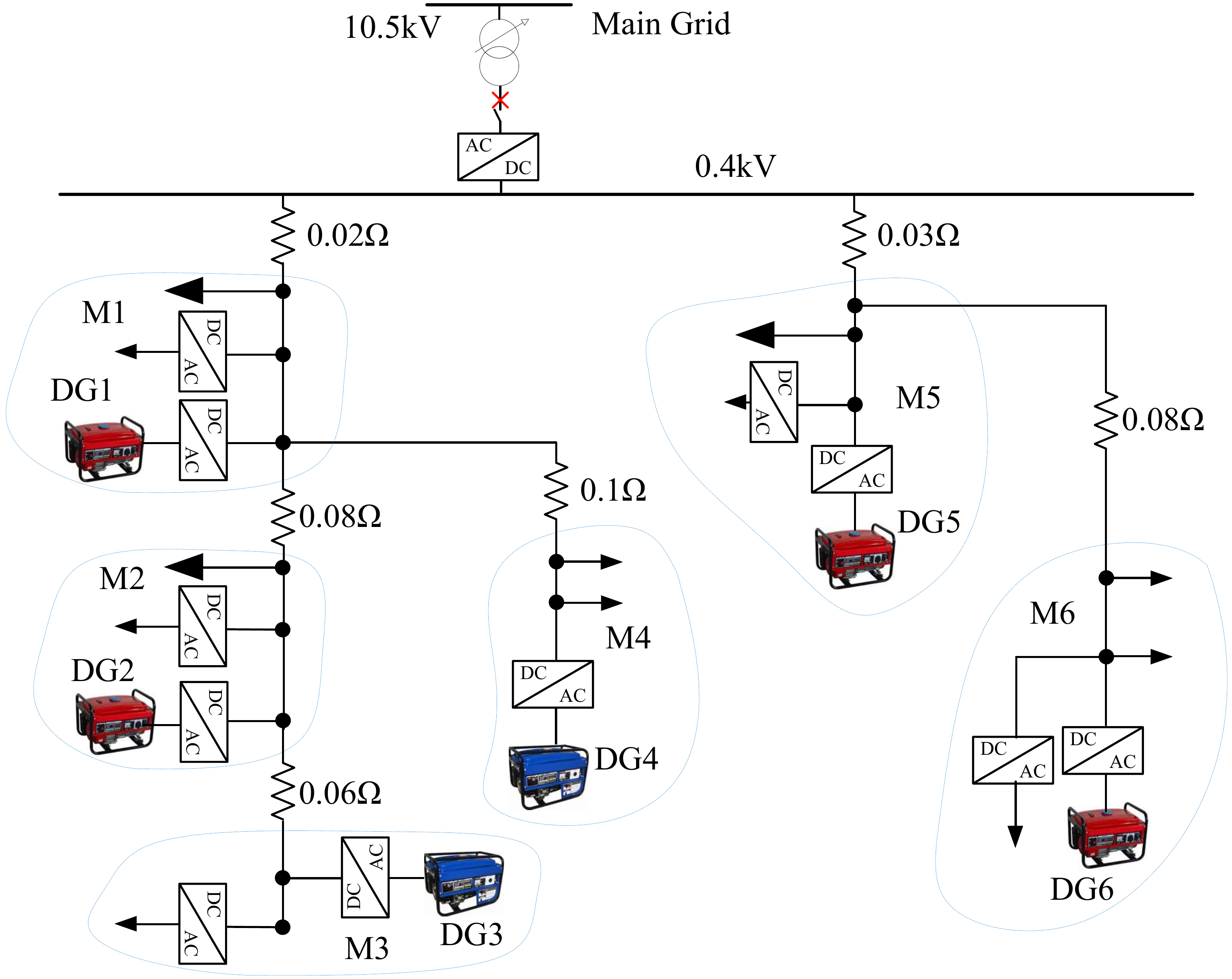}
	\caption{Six microgrids system}
	\label{test system}
\end{figure}

\begin{figure}[!t]
	\centering
	\includegraphics[width=0.4\textwidth]{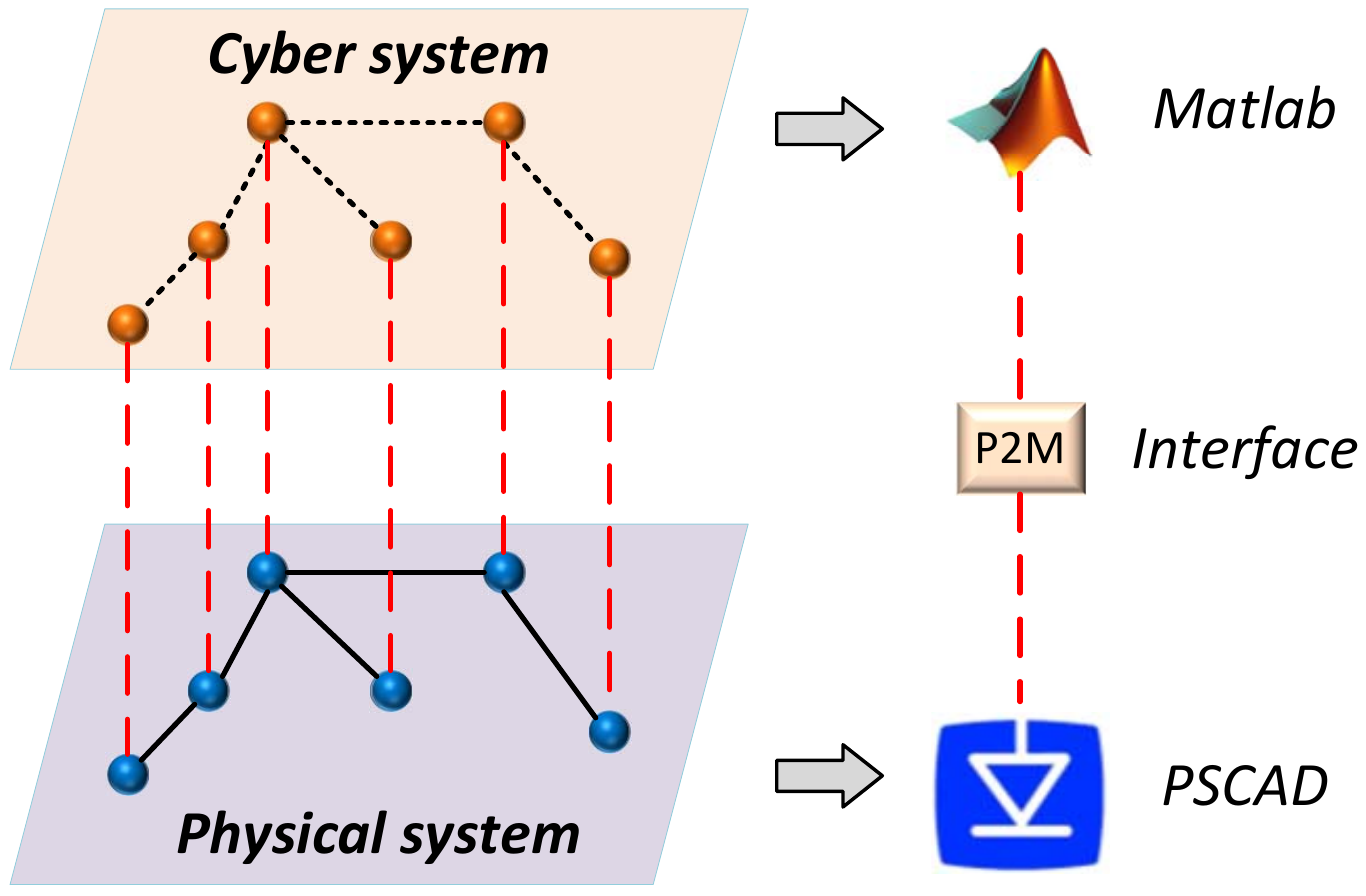}
	\caption{Simulation design combined with PSCAD and MATLAB}
	\label{interface}
\end{figure}

\begin{table}[!t]
	\renewcommand{\arraystretch}{1.3}
	\centering
	\caption{\textsc{System parameters}}
	\label{SysPara}
	\begin{tabular}{c c c c c c c}
		\hline
		\hline 
		  & $M_1$ & $M_2$ & $M_3$ & $M_4$ &$M_5$ &$M_6$\\
		\hline
		$a_i$ & 0.036 & 0.03 & 0.035 & 0.03 &0.035 &0.042\\
		$b_i$ & 1 & 1 & 1 & 1 & 1 & 1\\
		$p_i^d  \text{(kW)}$ & 51 & 50 & 52 & 49 & 52 &50\\
		$\overline{p}_i^g  \text{(kW)}$ & 50 & 60 & 55 & 60 & 55 & 45\\
		$\overline{V}_i  \text{(V)}$ & 420 & 420 & 420 & 420 & 420 & 420\\
		$\underline{V}_i  \text{(V)}$ & 380 & 380 & 380 & 380 & 380 & 380\\				
		$k_i$ & 0.12 & 0.125 & 0.164 & 0.131 & 0.156 & 0.131
		\\
		\hline
		\hline
	\end{tabular}
\end{table}

The objective function is set as $f_i(p_i^g)= {\frac{{{a_i}}}{2}{{\left( p_i^g \right)}^2} + b_ip_i^g }$, which represents the generation cost of the whole system as the generation cost also takes the quadratic form \cite{Xu:Distributed, Wang:a}. If $b_i>0$, it satisfies A1. 
Some parameters for these microgrids are provided in Table.\ref{SysPara} .
The simulation case is that load demands in each microgrid is $(41, 40, 42, 39, 42, 40)\text{kW}$ at first, then they will increase to $(51, 50, 52, 49, 52, 50)\text{kW}$ at time 1s. All the three regular control strategies are utilized, i,e, DG1 and DG6 adopt droop control, DG2 and DG5 use power control while DG3 and DG4 adopt voltage control. 

\subsection{Accuracy Analysis}

In this subsection, we also use CVX tool in Matlab to solve ESOCP, results of which after load increases are utilized as basic values to validate the accuracy of the proposed approach. Results of these two methods are compared in Table \ref{comparison}.

\begin{table}[!t]
	\renewcommand{\arraystretch}{1.3}
	\caption{Comparision with Centralized Optimization \label{comparison}}%%%Table caption 
	\centering
	{%%%%%%%%% \tabcolsep command is to adjust the inter column spacing of table
		\begin{tabular}{c|c|c|c|c|c|c}%%%The number of columns has to be defined here
			\hline
			\hline
			%\toprule [2pt]
			 & \multicolumn{3}{c|}{Generation \text{(kW)}} & \multicolumn{3}{c}{Power reference \text{(kW)}}
			\\ %%%% Table body
			\hline
			$\text{DG}_i$	& ${p}_g^{d}$  & ${p}_g^{c}$ &  $e$ (\%) & $\hat p^{d}$  & $\hat p^{c}$ &  $e$ (\%) 
			\\ %%%% Table body
			\hline
			1&	48.1701&	48.1823&	-0.0253&   46.7315&	 46.8543&	-0.2621\\
			2&  57.4041&	57.4227&    -0.0324&   56.3536&  56.7590&	-0.5381\\
			3&  49.3516&	49.3602&	-0.0174&   48.2000&	 48.6611&	-0.3311\\
			4&  56.9853&	56.9861&    -0.0014&   56.9264&	 56.9861&   -0.1048\\
			5&  49.9761&	50.0053&    -0.0584&   48.8660&	 48.3499&	0.4470\\
			6&  42.1217&	42.1495&    -0.0660&   39.2465&	 39.2176&	0.0737\\
			\hline
			\hline
		\end{tabular}}{}
	\end{table}

In Table \ref{comparison}, ${p}_g^{d}$ and $\hat p^{d}$ are $p_i^g$ and $\hat{p}_i$ of all DGs obtained by the proposed approach, while ${p}_g^{c}$ and ${\hat p}^{c}$ are values obtained using the CVX tool. $e$ is the errors of  ${p}^{d}$ and $\hat p^{d}$ with regards to ${p}^{c}$ and ${\hat p}^{c}$. From results in Table \ref{comparison}, it can be seen that the absolute errors between $p_g^d$ and $p_g^c$ in each MGs are smaller than $0.07\%$. In addition, the absolute errors between $\hat p^d$ and $\hat p^c$ are smaller than $0.6\%$. Both validate the accuracy of the proposed approach.

\subsection{Dynamic Process}
%\subsubsection{All power control}

In this subsection, 
%The tie line dynamics and $\hat{p}$ dynamics for DG1 and DG6 are illustrated in Fig.\ref{Voltage1}. It is shown that tie line powers and $\hat{p}$ will remain steady after 4s.
%
%\begin{figure}[!t]
%	\centering
%	\includegraphics[width=0.49\textwidth]{droop_line.pdf}
%	\caption{Tie line powers (left) and $\hat{p}$ (right) dynamics}
%	\label{Voltage1}
%\end{figure}
we analyze the impacts of generation limits on the dynamic property. To do this, we compare dynamic responses of the inverter outputs in microgrids 2 and 5 by two scenarios: with and without saturation. The trajectories in two cases are given in Fig.\ref{mixed generation}. In both cases, the same steady state generations are achieved. However, with the saturated controller, the generations of DG2 and DG5 remain within the limits in both transient and steady state. On the contrary, generatons of DG2 and DG5 violate their upper limits in the transient, which is practically infeasible.  

Similarly, we also compare the voltage dynamics of DG3 and DG4 in two scenarios: with and without saturation. The trajectories in two cases are given in Fig.\ref{mixed voltage}. In both cases, the same steady state voltages are achieved. However, with the saturated controller, the voltages of DG3 and DG4 remain within the limits in both transient and steady state. On the contrary, voltages of DG3 and DG4 violate their upper limits in the transient process if saturation is not considered. As we know, the high voltage is both harmful to the power electronic equipments and  system operators. In this sense, our method can increase system security.

%\subsubsection{Mixed control}

\begin{figure}[!t]
	\centering
	\includegraphics[width=0.49\textwidth]{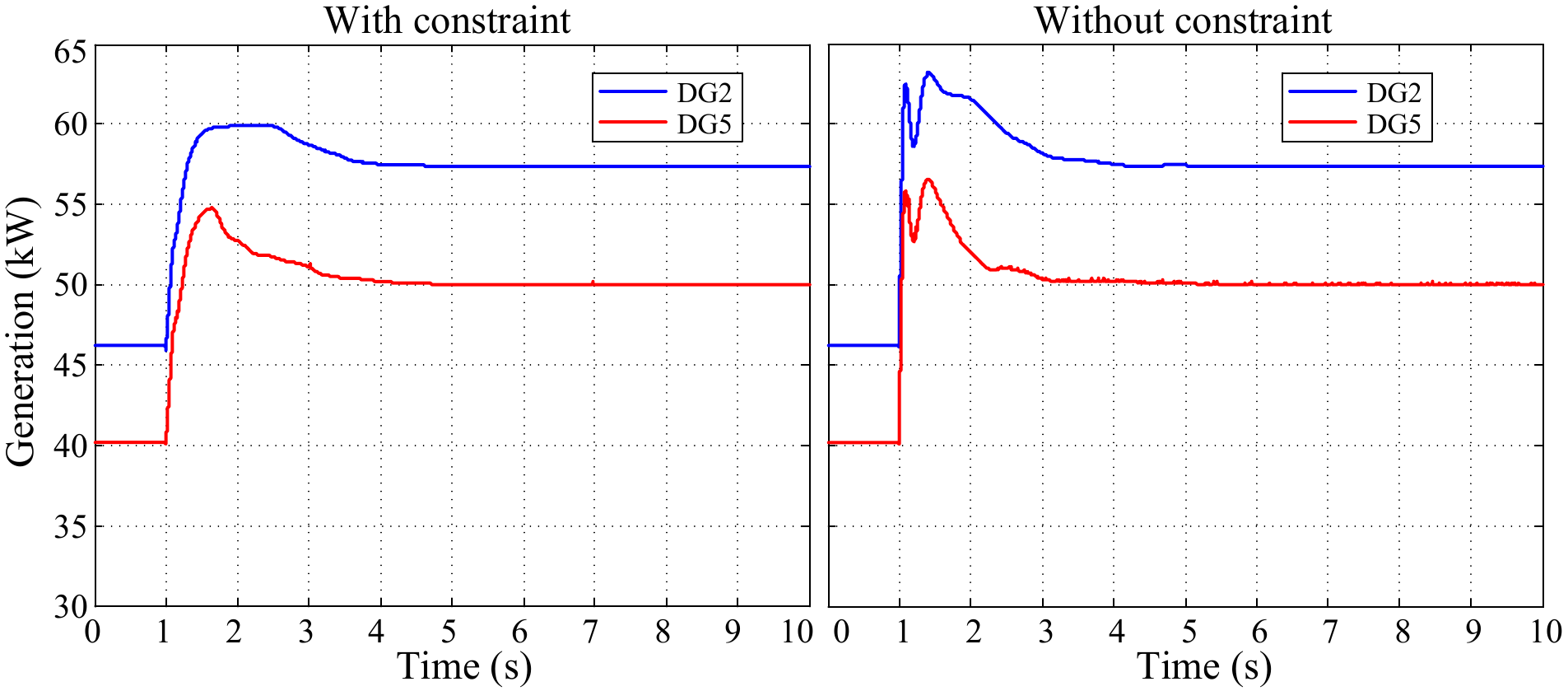}
	\caption{Generation dynamics with and without constraints }
	\label{mixed generation}
\end{figure}

\begin{figure}[!t]
	\centering
	\includegraphics[width=0.49\textwidth]{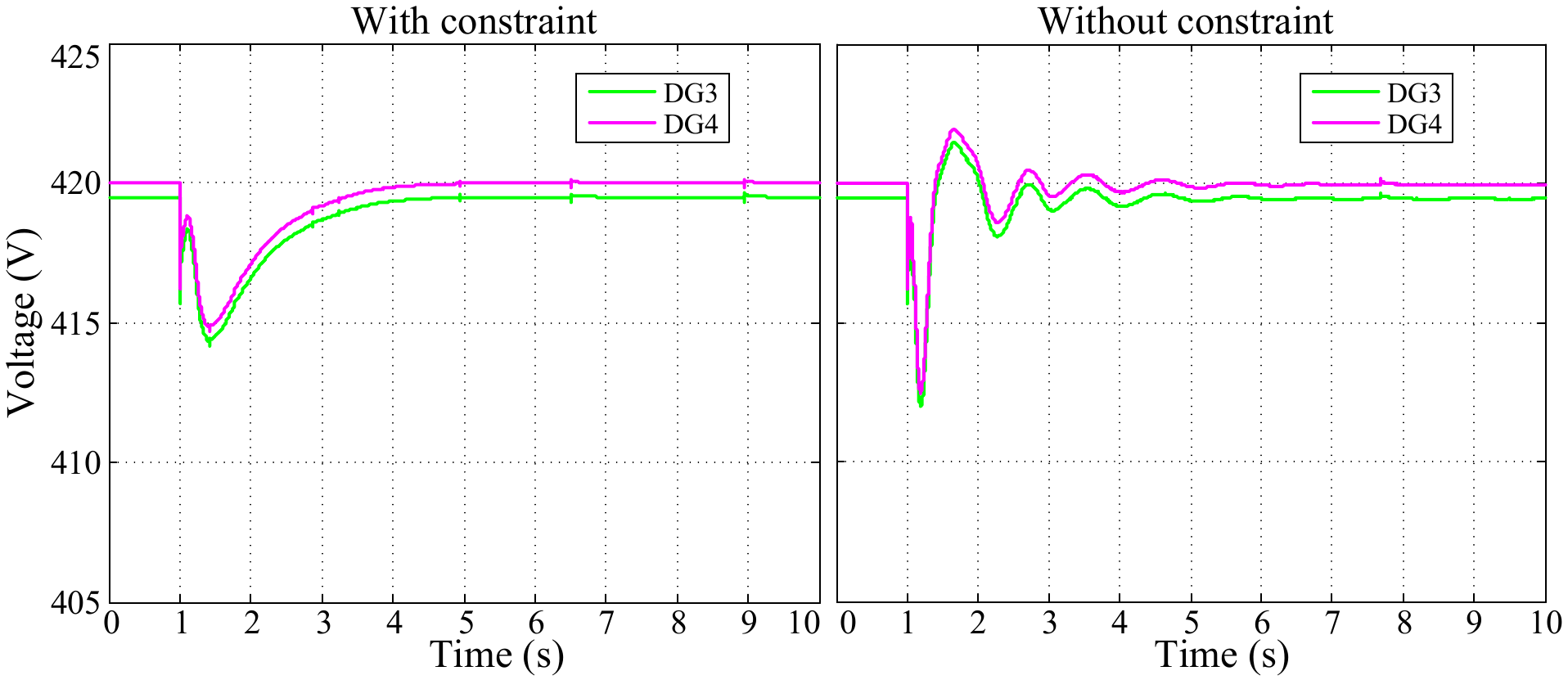}
	\caption{Voltage dynamics  with and without constraints}
	\label{mixed voltage}
\end{figure}

We reset $\overline{p}_i^g=(60, 55, 60, 65, 48, 50)$kW at $t=9$s, where
the power limits of DG2 and DG5 are reduced to $55$kW and $48$kW respectively. This implies that they can be strictly reached in the steady state. This scenario often happens in microgrids since generation limits of renewable resources such as wind turbines and PVs can change rapidly due to uncertainties.
The generations of all DGs with different capacity constraints are given in Fig.\ref{saturation}.

\begin{figure}[!t]
	\centering
	\includegraphics[width=0.35\textwidth]{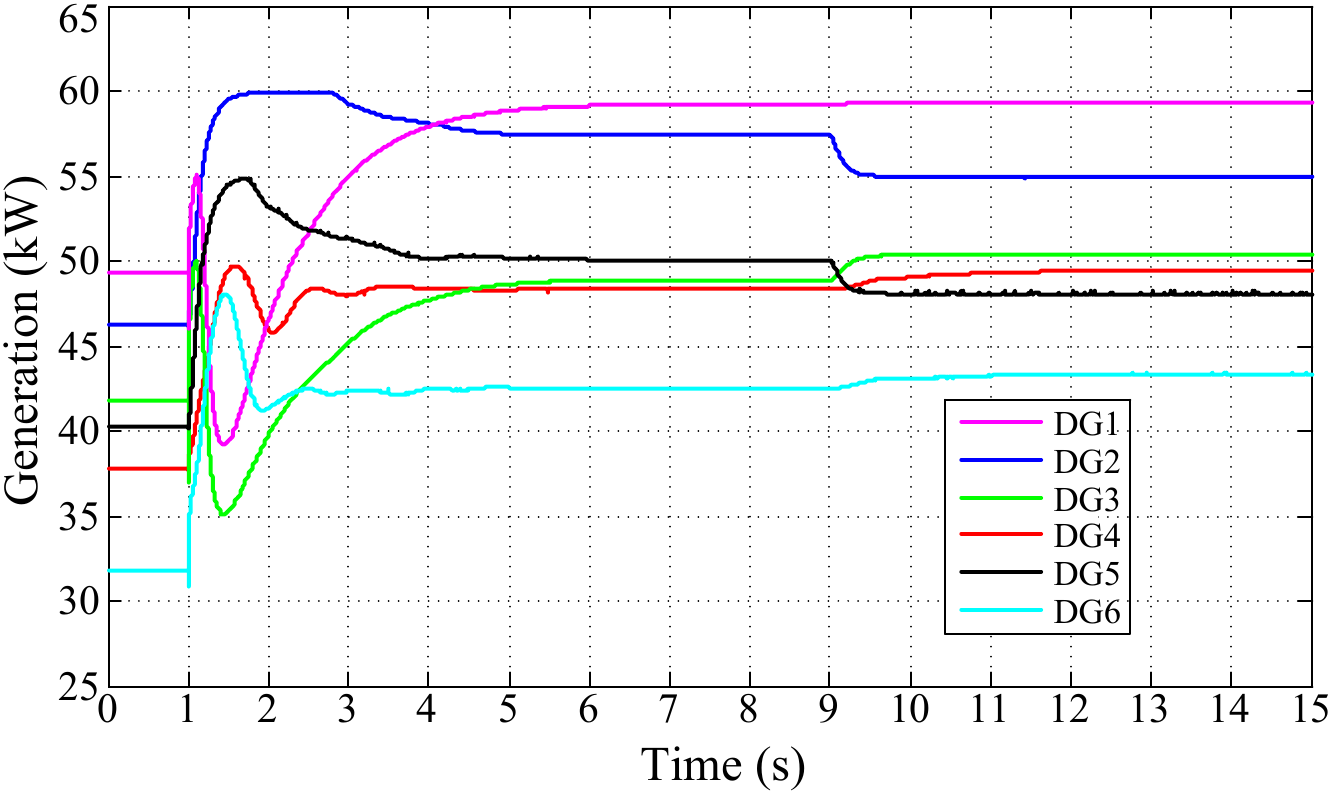}
	\caption{Generation dynamics of different constraints}
	\label{saturation}
\end{figure}

We have checked that generations in Fig.\ref{saturation} are identical with results obtained by CVX. In Fig.\ref{saturation}, it is shown that generations of DG2 and DG5 reduce rapidly to the capacity limits in the new situation.
Other DGs will change their generations to balance the power mismatch in the whole system. This implies our methodology can adapt to disturbances of renewable generations.

%Similarly, we also increase the minimal voltage of DG3 and DG4 to $416$V, which will be reached in the transient process. The voltage dynamics with new and original constraints are illustrated in Fig.\ref{saturation_voltage}. It is shown in Fig.\ref{saturation_voltage} that the voltages are within the limits in the transient even if the limits are very narrow.
%\begin{figure}[!t]
%	\centering
%	\includegraphics[width=0.49\textwidth]{saturation_voltage.pdf}
%	\caption{Voltage Dynamics of Different Constraints}
%	\label{saturation_voltage}
%\end{figure}

%Here, we will investigate the influence of interaction frequency between cyber and physical system. Denote the time inteval of the information exchange between PSCAD and Matlab varies as $\Delta t$, which implies that PSCAD interact with Matlab every $\Delta t$ time. We set $\Delta t=(0.1, 0.12, 0.15, 0.18, 0.2) $ms respectively, and dynamics of generation of DG5 and voltage of DG3 are shown in Fig.\ref{different reading}.
%
%\begin{figure}[!t]
%	\centering
%	\includegraphics[width=0.49\textwidth]{reading_pg_v.pdf}
%	\caption{Generation and voltage dynamics of different time intervals}
%	\label{different reading}
%\end{figure}
%
%From Fig.\ref{different reading}, it is seen that the steady state values of both generation and voltage for different time intervals are identical, which validates that the proposed method can adapt to different interaction time intervals. The only difference is the convergence speed, where large $\Delta t$ always implies slow convergence speed.

\subsection{Plug-n-play Analysis}

In this case, microgrid 6 is switched off at 1s, then it is switched on at 9s. When microgrid 6 is switched off, it has to supply the load demand itself while microgirds 1 to 5 remain connected. Voltage and generation dynamics in the whole process are illustrated in Fig. \ref{plug and play}.

\begin{figure}[!t]
	\centering
	\includegraphics[width=0.49\textwidth]{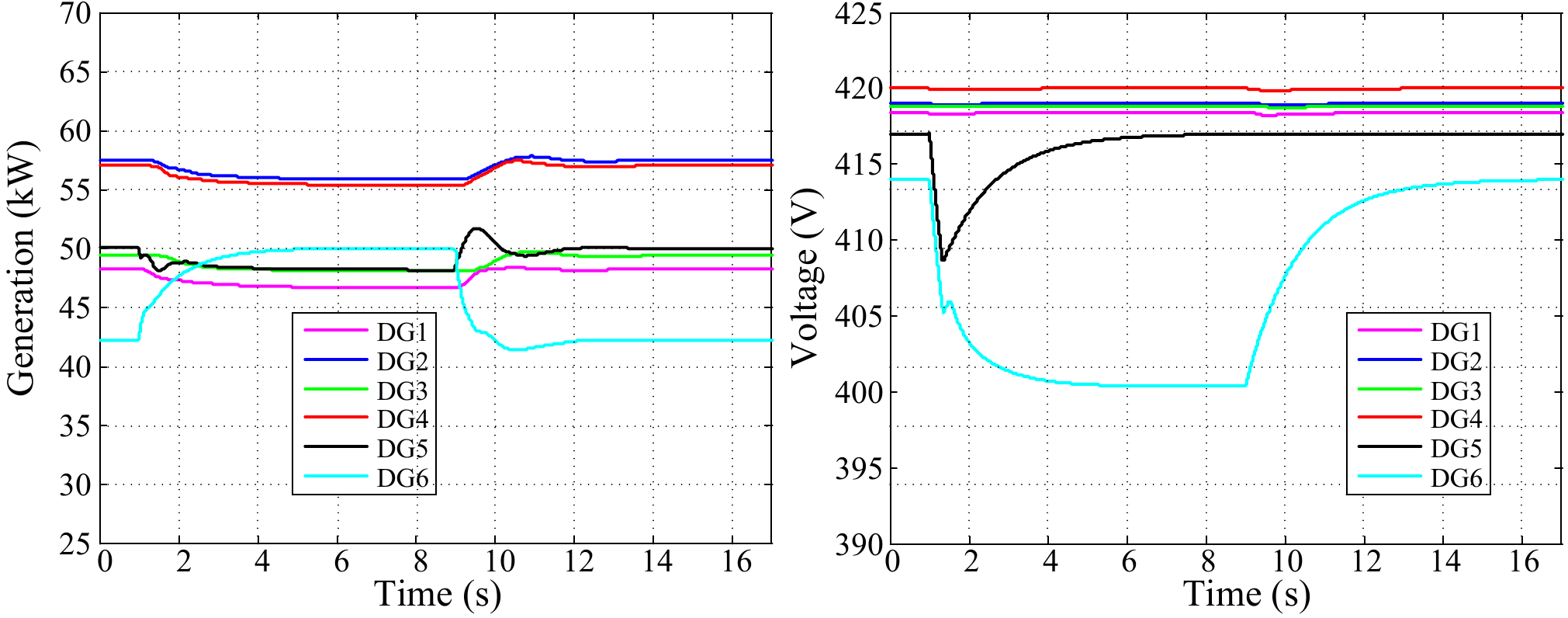}
	\caption{Generation and voltage dynamics of plug and play}
	\label{plug and play}
\end{figure}

It is shown that output of DG6 increases to 50kW to supply the load in microgrid 6 after being switched off. At the same time, the voltage of DG6 reduces to 400V. The result is identical with  that obtained by CVX. In addition, after microgrid 6 is switched on once again, the generations and voltages of all DGs recover to the original values. Moreover, by comparing the voltages in the transient process, it is shown that only the two DGs connected directly with the breaking point are influenced greatly, while other DGs like DG1-DG4 have very moderate transient process.  This validates that our controller can realize plug-n-play.

\section{Conclusion}
This paper addresses the distributed optimal control of stand-alone DC microgrids, where each microgrid may adopt one of the three different control strategies, such as power control, voltage control and droop control. The controller can provide commands for all these strategies, which implies it breaks restriction of various control strategies to achieve optimal operation point. A six-microgrid system based on the microgrid benchmark is utilized to demonstrate the efficacy of our designs. The error of results between proposed method and CVX tool is smaller than $0.6\%$, which validates the accuracy of the proposed approach. Moreover, the commands for power controlled and voltage controlled microgrids satisfy generation limits and voltage limits in both transient process and steady state. This increases the security of DC system. In addition, our controller can adapt to the uncertainties of renewable generations. Finally, the proposed approach can realize the plug-n-play. 

The tie-line limit is not considered in this work since the convex relaxation may be not exact if it is included. In the normal operation, tie line limit in microgrids is often satisfied by planning stage. However, it is also very important when large disturbance happens. In the future research, we will investigate approaches to addressing this problem.
% use section* for acknowledgment
%\section*{Acknowledgment}
%
%
%The authors would like to thank...

% Can use something like this to put references on a page
% by themselves when using endfloat and the captionsoff option.
\ifCLASSOPTIONcaptionsoff
  \newpage
\fi

\bibliographystyle{Bibliography/IEEEtranTIE}
\bibliography{mybib}

%\begin{thebibliography}{1}
%
%\bibitem{IEEEhowto:kopka}
%H.~Kopka and P.~W. Daly, \emph{A Guide to \LaTeX}, 3rd~ed.\hskip 1em plus
%  0.5em minus 0.4em\relax Harlow, England: Addison-Wesley, 1999.
%
%\end{thebibliography}

% insert where needed to balance the two columns on the last page with
% biographies
%\newpag

%\newpage
%$\ $
\newpage
\appendices

\makeatletter
\@addtoreset{equation}{section}
\@addtoreset{theorem}{section}
\makeatother
\renewcommand{\theequation}{A.\arabic{equation}}
\renewcommand{\thetheorem}{A.\arabic{theorem}}

\section{Proofs of Theorem \ref{uniqueness} and Theorem \ref{Th:same solution}}
\subsection{Proof of Theorem \ref{uniqueness}}

\begin{proof}
	If A2 holds, problem (\ref{initial optimization problem}) is also feasible due to the one-to-one map (\ref{map}). It suffices to prove the uniqueness of the optimal solution of SOCP. Let $x^{1*}=(p^{g1*},v^{1*},W^{1*})$ and $x^{2*}=(p^{g2*},v^{2*},W^{2*})$ be two optimal solutions of SOCP, then we have 
	\begin{align}
	\label{same optimal value}
	\sum\nolimits_{i \in {\cal N}} {f_i(p_i^{g1*}-p_i^d)}=\sum\nolimits_{i \in {\cal N}} {f_i(p_i^{g2*}-p_i^d)}
	\end{align}
	From the proof of Theorem 3 in \cite{Gan:Optimal}, we know 
	\begin{align}
	\frac{v_i^{1*}}{v_i^{2*}}&=\frac{v_k^{1*}}{v_k^{2*}}=\eta,\quad i\sim k\nonumber\\
	W_{ik}^{1*}&= \sqrt{v_i^{1*}v_k^{1*}}=\eta \sqrt{v_i^{2*}v_k^{2*}}=\eta W_{ik}^{2*} \nonumber
	\end{align}
	From (\ref{power flow1}), we have 
	\begin{align}
	p_i^{g1*}- p_i^d &=  \sum\nolimits_{k:k \sim i} (v_i^{1*}-W_{ik}^{1*})/r_{ik} \nonumber\\
	&=  \sum\nolimits_{k:k \sim i} (\eta v_i^{2*}-\eta W_{ik}^{2*})/r_{ik} \nonumber\\
	&=\eta(p_i^{g2*}- p_i^d) \nonumber
	\end{align}
	Since $f_i(p_i^g-p_i^d)$ is strictly increasing, we must have  $\eta=1$, otherwise it contradicts (\ref{same optimal value}). We have $x^{1*}=x^{2*}$, implying the uniqueness of SOCP solution. According to the one-to-one map (\ref{map}), solution of SSOCP is also unique. This completes the proof.
\end{proof}

\subsection{Proof of Theorem \ref{Th:same solution}}
\begin{proof}
	$\Rightarrow$ 1) Suppose $x^{1*} = (p^{g1*}, P^{1*},$ $l^{1*},$ $v^{1*})$ is the optimal solution of (\ref{initial optimization problem}), there exists an unique $\hat p^*$ satisfying (\ref{droop control}). Since two problems have same objective function and constraints except constraint (\ref{droop control}), $(x^{1*},\hat p^{2*})$ is the optimal solution of DSOCP.
	
	$\Rightarrow$ 2) Based on Theorem \ref{uniqueness} and assertion 1) of Theorem \ref{Th:same solution}, this assertion is easy to obtain.
	
	$\Rightarrow$ 3) Since $(p^{g2*},P^{2*},l^{2*},v^{2*})$ is the optimal solution of DSOCP, it also satisfies all the constraints of SSOCP. Moreover, DSOCP and SSOCP have identical objective functions, hence $(p^{g2*},P^{2*},l^{2*},v^{2*})$ is the optimal solution of SSOCP. Due to the uniqueness of optimal solution of SSOCP, we have $(p^{g2*},P^{2*},l^{2*},v^{2*})=(p^{g1*},P^{1*},l^{1*},v^{1*})$. This completes the proof.	
\end{proof}

\makeatletter
\@addtoreset{equation}{section}
\@addtoreset{theorem}{section}
\makeatother
\renewcommand{\theequation}{B.\arabic{equation}}
\renewcommand{\thetheorem}{B.\arabic{theorem}}
\section{Proofs of Lemma \ref{lemma saturation optimal solution} and Theorem \ref{theorem optimality}}
\subsection{Proof of Lemma \ref{lemma saturation optimal solution}}

\begin{proof}
	With assumption A1, A2 and A4, the strong duality holds. $(x_p^*, x_d^*)$ is the primal-dual optimal if and only if it satisfies the KKT conditions.
	
	The Lagrangian of ESOCP is given in (\ref{Lagrangian}).

	\begin{align}
		L &= \sum\nolimits_{i \in {\cal N}} f_i(p_i^{g}-p_i^d)  + \sum\nolimits_{i \in {\cal N}} {\frac{1}{2}{z_i^2}} + \sum\nolimits_{i \in {\cal N}} {\frac{1}{2}{y_i^2}}  \nonumber\\
		&\qquad +\sum\nolimits_{i \in {\cal N}}\epsilon_i(v_i+k_ip^g_i-v_i^*-k_i\hat p_i)  \nonumber \\
		&\qquad + \sum\nolimits_{(i,k) \in {\cal E}} {{\lambda _{ik}}\left( {{P_{ik}} + {P_{ki}} - {r_{ik}}{l_{ik}}} \right)}  \nonumber\\
		&\qquad + \sum\nolimits_{i \in {\cal N}} {{\gamma _{ik}}\left( {{v_i} - {v_k} - {r_{ik}}\left( {{P_{ik}} - {P_{ki}}} \right)} \right)} \nonumber\\
		&\qquad + \sum\nolimits_{(i,k) \in {\cal E}} {{\rho _{ik}}\left( {{{P_{ik}^2}}/{{{v_i}}} - {l_{ik}}} \right)} 
		\label{Lagrangian}
	\end{align}
%	\newcounter{TempEqCnt}
%	\setcounter{TempEqCnt}{\value{equation}}
%	\setcounter{equation}{9}
%	\begin{figure*}[!t]
%		\begin{align}
%		L &= \sum\limits_{i \in {\cal N}} f_i(p_i^{g}-p_i^d)  + \sum\limits_{i \in {\cal N}} {\frac{1}{2}{z_i^2}} + \sum\limits_{i \in {\cal N}} {\frac{1}{2}{y_i^2}}  - \sum\limits_{i \in {\cal N}} {{\mu _i}\left( {p_i^g - p_i^d - \sum\limits_{k:k \in {N_i}} {{P_{ik}}} } \right)}+\sum\limits_{i \in {\cal N}}\epsilon_i(v_i+k_ip^g_i-v_i^*-k_i\hat p_i)  \nonumber \\
%		&\quad + \sum\limits_{ik \in {\cal E}} {{\lambda _{ik}}\left( {{P_{ik}} + {P_{ki}} - {r_{ik}}{l_{ik}}} \right)}  + \sum\limits_{i \in {\cal N}} {{\gamma _{ik}}\left( {{v_i} - {v_k} - {r_{ik}}\left( {{P_{ik}} - {P_{ki}}} \right)} \right)}  + \sum\limits_{ik \in {\cal E}} {{\rho _{ik}}\left( {\frac{{P_{ik}^2}}{{{v_i}}} - {l_{ik}}} \right)} 
%		\label{Lagrangian}
%		\end{align}
%		\hrulefill
%		\vspace*{2pt}
%	\end{figure*}  
%	\setcounter{equation}{\value{TempEqCnt}} 
%	
Based on (\ref{Lagrangian}) we can obtain the KKT conditions
	\begin{subequations}
		\begin{align}
		&{G_i(p_i^{g*}) - {\mu _i^*} +k_i\epsilon_i^* + z_i^*+k_iy_i^* }  \left\{ \begin{array}{lll}
		\geq  0,\  p^{g*}_j = 0\\
		=  0,\  0 < p^{g*}_j < \overline{p}^g_j \\
		\leq  0,\  p^{g*}_j = \overline{p}^g_j \\
		\end{array}  \right.		
		\label{KKT generation}\\
		&y_i^*+ \sum\limits_{k \in {N_i}} {{\gamma _{ik}^*}} +\epsilon_i^*  - \sum\limits_{k \in {N_i}} {{\rho^* _{ik}}\frac{{(P_{ik}^*)^2}}{{(v_i^*)^2}}}  \left\{ \begin{array}{lll}
		\geq  0,\  p^{g*}_j = 0\\
		=  0,\  0 < p^{g*}_j < \overline{p}^g_j \\
		\leq  0,\  p^{g*}_j = \overline{p}^g_j \\
		\end{array}  \right.
		\label{KKT voltage square}\\
		&0 =  - \bigg( {{\mu _i^*} + {\lambda _{ik}^*} - {\gamma _{ik}^*}{r_{ik}} + 2{\rho _{ik}^*}{{{P_{ik}^*}}}/{{{v_i^*}}} - z_i^*} \bigg)
		\label{KKT line power}		\\
		&0 =  - \left( { - {\lambda _{ik}^*}{r_{ik}} - {\rho _{ik}^*} - {\rho _{ki}^*}} \right)
		\label{KKT current square}\\
		&0=k_i\epsilon_i^*+k_iy_i^* \label{KKT hat p}
		\\
		&0 =  -  \bigg( {p_i^{*g} - p_i^d - \sum\limits_{k:k \in {N_i}} {{P_{ik}^*}} } \bigg)\label{KKT mu}\\
		&0 = v_i^*+k_ip^{g*}_i-v_i^*-k_i\hat p_i^*\\	
		&0 =   {{P_{ik}^*} + {P_{ki}^*} - {r_{ik}}{l_{ik}^*}}\\
		&0 =   {{v_i^*} - {v_k^*} - {r_{ik}^*}\left( {{P_{ik}^*} - {P_{ki}^*}} \right)}\label{KKT gamma} \\
		&0 =   \left( {{{(P_{ik}^*)^2}}/{{{v_i^*}}} - {l^*_{ik}}} \right)\rho _{ik}^*,\ \rho^* _{ik}\ge 0 
		\label{KKT rho}
		\end{align}
		$(x_p^*, x_d^*)$ is a primal-dual optimal if and only if it satisfies the KKT conditions. It can be checked that (\ref{KKT generation})  and (\ref{KKT voltage square}) are equivalent to	
		\label{KKT conditions}
	\end{subequations} 
	\begin{align}
	p_i^{g*} &= \left[ {p_i^{g*} - \left( {G_i(p_i^{g*}) - {\mu _i^*} +k_i\epsilon_i^* + z_i^*+k_iy_i^* } \right)} \right]_0^{\overline p_i^g}  \nonumber\\
	v_i^* &=\bigg [v_i^* - \bigg(y_i^*+ \sum\limits_{k \in {N_i}} {{\gamma _{ik}^*}} +\epsilon_i^*  
	- \sum\limits_{k \in {N_i}} {{\rho^* _{ik}}\frac{{(P_{ik}^*)^2}}{{(v_i^*)^2}}}  \bigg)\bigg]_{{\underline{V}_i}^2}^{{\overline{V}_i}^2}\nonumber
	\end{align}
	This completes the proof.	
\end{proof}

\subsection{Proof of Theorem \ref{theorem optimality}}
\begin{proof}
	$\Rightarrow$:
	Suppose $(x_p^*, x_d^*)$  is primal-dual optimal, $(x_p^*, x_d^*)$ satisfies the KKT conditions. It can be obtained directly from (\ref{KKT line power})-(\ref{KKT gamma}) that right sides of dynamics (\ref{line power dynamics})-(\ref{gamma dynamics}) vanish. Right sides of (\ref{generation dynamics}) and (\ref{voltage square dynamics}) vanish due to Lemma \ref{lemma saturation optimal solution}.  From (\ref{KKT rho}) and exactness of convex relaxation, we know 
	\begin{align}
	{{{(P_{ik}^*)^2}}/{{{v^*_i}}} - {l^*_{ik}}}=0,\ \ {\rho }^*_{ik}({{{(P_{ik}^*)^2}}/{{{v^*_i}}} - {l^*_{ik}}})=0\nonumber
	\end{align}
	%	$${\frac{{(P_{ik}^*)^2}}{{{v^*_i}}} - {l^*_{ik}}}=0,$$ 
	%	$${\rho }^*_{ik}\bigg({\frac{{(P_{ik}^*)^2}}{{{v^*_i}}} - {l^*_{ik}}}\bigg)=0 $$	
	Then, the right sides (\ref{rho dynamics}) vanishes. This implies that $(x_p^*, x_d^*)$ is an equilibrium of (\ref{algorithm}). 
	
	\noindent
	$\Leftarrow$: Suppose $(x_p^*, x_d^*)$  is an equilibrium of (\ref{algorithm}), then all the right sides of (\ref{algorithm}) vanish. (\ref{generation dynamics})-(\ref{KKT gamma}) are exactly the KKT conditions (\ref{KKT generation})-(\ref{gamma dynamics}). $\dot\rho_{ik}=0$ implies $\left( {\frac{{(P_{ik}^*)^2}}{{{v_i^*}}} - {l^*_{ik}}} \right)\rho _{ik}^*$ and $\rho^* _{ik}\ge 0$, which is identical to (\ref{KKT rho}). Thus, $(x_p^*, x_d^*)$  is primal-dual optimal. This completes the proof.
\end{proof}

\makeatletter
\@addtoreset{equation}{section}
\@addtoreset{theorem}{section}
\makeatother
\renewcommand{\theequation}{C.\arabic{equation}}
\renewcommand{\thetheorem}{C.\arabic{theorem}}

\section{Proof of Theorem \ref{theorem convergence}}
Define the following  function. 
\begin{align}
\tilde U(x) &=  -   F(x)^T\cdot({H}(x) - x)  \nonumber
\\
&\quad- \frac{1}{2}||{H}(x) - x||_2^2 + \frac{1}{2}||x - x^*||_2^2
\label{Lyapunov function}
\end{align}
From \cite{Fukushima:Equivalent}, we know that $\tilde U(x)\ge0$ and $\tilde U(x)=0$ holds only at any equilibrium point $x^*$.

For any fixed $\sigma_\rho$ , $\tilde U$ is continuously differentiable as $F(x)$ is continuously differentiable in this situation. Moreover, $\tilde U$ is nonincreasing for fixed $\sigma_\rho$, as we will prove in  Lemma \ref{lemma derivative of Lyapunov}. 
It is worthy to note that the index set $\sigma_\rho$ may change sometimes, resulting in discontinuity of $\tilde U$ \cite{Feijer:Stability}. To circumvent such an issue, we slightly modify the definition of $\tilde U$ at the discontinuous points as:
\begin{enumerate}
	\item $U(x) := \tilde{U}(x)$, if $\tilde{U}(x)$ is continuous at $x$;
	\item $U(x) := \limsup\limits_{w\to x} \tilde{U}(w)$, if  $\tilde{U}(x)$ is discontinuous at $x$.
\end{enumerate}
Then $U(x)$ is upper semi-continuous in $x$, and  $ U(x) \geq 0$ on $S$ and $ U(x)=0$ holds only at any equilibrium $x^* = H(x^*)$. 

%As  $U(x)$ is not differentiable for $x$ at discontinuous points, we further define the gradient of $U(x)$ as follows. 
%\begin{enumerate}
%	\item $\frac{\partial U}{\partial x}(x):= \frac{\partial\tilde U}{\partial x}(x)$, if $\tilde U(x)$ is continuous at $x$;
%	\item $\frac{\partial U}{\partial x}(x) := \lim\limits_{w\to x} \left(\frac{\limsup\limits_{w\to x} \tilde U(w) - \tilde U(x)}{w-x}\right)$, if  $\tilde U(x)$ is discontinuous at $x$.
%\end{enumerate}

Note that $U$ is continuous almost everywhere except the switching points. Hence $U(x)$ is \emph{nonpathological} \cite[Definition 3 and 4]{Bacciotti:Nonpathological}. With these definitions and notations above, we can prove  Theorem \ref{theorem convergence}.

To prove Theorem \ref{theorem convergence}, we first start with the following lemma.

\begin{lemma}
	\label{lemma derivative of Lyapunov}
	Suppose A1, A2 and A3 hold. Then
	\begin{enumerate}
		\item $U(x)$ is always decreasing along system (\ref{projection rewritten dynamics}). 
		\item the trajectory $x(t)$ is bounded.
		\item every trajectory $x(t)$ starting from a finite initial state ultimately converges to the largest weakly invariant subset $Z^*$ of $Z^+:=\{\ x\ | \ \dot U(x)= 0\ \}$. 
		\item every $x^*\in Z^*$ is an equilibrium point of \eqref{algorithm}.
	\end{enumerate} 
\end{lemma}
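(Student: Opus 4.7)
The plan is to treat the dynamics (\ref{projection rewritten dynamics}) as a switched projected dynamical system indexed by the active set $\sigma_\rho$, and to prove the four assertions via projection gradient theory together with a LaSalle-type argument adapted to nonpathological Lyapunov functions \cite{Bacciotti:Nonpathological}. The key underlying fact I would exploit is that, because ESOCP is convex under A1 and A4, the KKT saddle operator $F(x)$ is monotone on $S$, which is exactly what Fukushima's regularized gap function $\tilde U$ is built to convert into a Lyapunov decrease.

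For assertion 1, I would first fix $\sigma_\rho$, on which $F$ is continuously differentiable, and differentiate $\tilde U$ along (\ref{projection rewritten dynamics}). The cross term $-F(x)^T(H(x)-x)$ is handled by the defining variational inequality of the projection, $(H(x)-(x-F(x)))^T(y-H(x))\ge 0$ for every $y\in S$, applied both at $y=x^*$ and at $y=x$. Combined with the squared term $-\tfrac12\|H(x)-x\|^2$, this yields the standard gap identity. The last term $\tfrac12\|x-x^*\|^2$ contributes $(x-x^*)^T(H(x)-x)$, which together with monotonicity of $F$ around the equilibrium $x^*$ (guaranteed by convexity of ESOCP) produces $\dot{\tilde U}\le 0$ on each continuity interval of $\sigma_\rho$. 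At instants where $\sigma_\rho$ changes, the $\limsup$ redefinition of $U$ ensures no upward jump, so $U$ is nonincreasing along the whole trajectory.

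For assertion 2, Lemma \ref{lemma transient boundedness} already confines $(p^g(t),v(t))$ to the compact set $X$, so the projected coordinates stay bounded. For the remaining (unprojected) variables, assertion 1 gives $U(x(t))\le U(x(0))<\infty$ by A3, and since the dominant term $\tfrac12\|x-x^*\|^2$ appears additively in $U$ with the remaining terms being $\ge 0$ by the gap interpretation, $\|x(t)-x^*\|$ is bounded, hence $x(t)$ is bounded. For assertions 3 and 4, since $U$ is upper semi-continuous, nonpathological, nonnegative, and nonincreasing along the bounded trajectory, the invariance principle of \cite{Bacciotti:Nonpathological} applied to the switched system yields convergence to the largest weakly invariant subset $Z^*$ of $Z^+=\{x\mid \dot U(x)=0\}$. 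To identify $Z^*$ with equilibria, I would revisit the Lyapunov computation and note that equality $\dot U=0$ can hold only when $(x-x^*)^T F(x) = (x-x^*)^T F(x^*)$ (strict monotonicity direction) and $H(x)=x$, so any point of $Z^*$ satisfies $H(x^*)=x^*$, i.e.\ is an equilibrium of (\ref{algorithm}).

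The main obstacle I anticipate is the handling of the switching of $\sigma_\rho$: one must verify carefully that $U$ defined via $\limsup$ is genuinely nonpathological and that no right-limit jump across a switching surface can push $U$ upward, as well as checking that $\dot U$ in Clarke's sense remains nonpositive across these surfaces so the invariance principle is applicable. A secondary delicate point is the monotonicity of $F$ on all of $S$, not merely near $x^*$, since some blocks of $F$ involve the bilinear coupling $\rho_{ik}P_{ik}^2/v_i$; here one relies on convexity of the conic constraint $P_{ik}^2/v_i\le l_{ik}$ (which underwrites monotonicity of the corresponding saddle-operator block) together with A1 to close the argument.
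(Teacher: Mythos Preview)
Your approach to assertions 1--3 is essentially the paper's: fix $\sigma_\rho$, differentiate $\tilde U$, split into a projection-inequality term, a saddle term $(x-x^*)^T(-F(x))$, and a Jacobian term $-(H(x)-x)^T\nabla_x F(x)(H(x)-x)$; then handle switching of $\sigma_\rho$ and apply the nonpathological invariance principle of \cite{Bacciotti:Nonpathological}. (One minor point: A4 is also used, since the saddle step requires a primal--dual optimal $x^*$.) For the switching, the paper does not rely solely on the $\limsup$ redefinition but actually checks the two cases: when $\sigma_\rho$ shrinks the new term enters at value zero (no jump), and when $\sigma_\rho$ grows a nonnegative term $(P_{ik}^2/v_i - l_{ik})^2/2$ is dropped (downward jump only). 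You flagged this as an obstacle; that case analysis is what closes it.

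There is, however, a genuine gap in your argument for assertion 4. You claim that $\dot U=0$ forces $H(x)=x$ directly, via ``strict monotonicity.'' But $F$ is \emph{not} strictly monotone: the symmetric part of its Jacobian $Q$ is only positive semidefinite (the Lagrangian is linear in $P_{ik}$, $l_{ik}$, $\hat p_i$ and in all dual variables), so the Jacobian term $-\dot x^TQ\dot x=0$ merely puts $\dot x$ in the null space of $Q$, not at zero. The paper's route to $\dot x=0$ on $Z^*$ is substantially more delicate and uses the specific structure of ESOCP: from the saddle equality $L(x_1^*,x_2(t))\equiv L(x_1^*,x_2^*)$ one differentiates in $t$ to obtain $\dot x_2^T\dot x_2=0$, so all dual variables (and hence $y_i,z_i$) freeze; this already forces $\dot l_{ik}=\dot{\hat p}_i=0$ from (\ref{current square dynamics}), (\ref{hat p dynamics}). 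Then the quadratic form $\dot x^TQ\dot x$ is expanded explicitly and shown to contain the blocks $(\dot p_i^g)^T\nabla^2_{p^g}f\,\dot p_i^g$ and $(\dot v_i+k_i\dot p_i^g-k_i\dot{\hat p}_i)^2$, which, together with the strong convexity in A1, give $\dot p_i^g=0$ and then $\dot v_i=0$; a final case split on $\rho_{ik}=0$ versus $\rho_{ik}>0$ yields $\dot P_{ik}=0$. Your proposal does not supply any of this null-space analysis, and without it assertion 4 does not follow.
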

\begin{proof}[Proof of Lemma \ref{lemma derivative of Lyapunov}]
	In light of Theorem 3.2 in \cite{Fukushima:Equivalent}, $U(x)$ is continuously differentiable if $F(x)$ is continuously differentiable. Its gradient is 
	\begin{align}
	\label{gradient Lyapunov}
	\nabla_{x}U(x)&=F(x)-(\nabla_xF(x)-I)(H(x)-x)+x-x^*
	\end{align}
	Then the derivative of $U(x)$ is 
	\begin{align}
	\label{derivative Lyapunov1}
	\dot U(x)=\nabla_{x}^T U(x)\cdot \dot x=\nabla_{x}^T U(x)\cdot (H(x)-x)
	\end{align}
	
	Combining (\ref{gradient Lyapunov}) and (\ref{derivative Lyapunov1}), we have 
	\begin{subequations}
		\begin{align}
		\dot U(x)&=			
		-(H(x)-x)^T\nabla_x F(x)(H(x)-x)
		\nonumber\\
		&\quad+\left<F(x)+H(x)-x, \  (H(x)-x)\right>\nonumber\\
		&\quad+\left<x-x^*, \  (H(x)-x)\right>\nonumber\\
		&=\left<F(x)+H(x)-x, \  H(x)-x^*+x^*-x\right>\nonumber\\
		&\quad+\left<x-x^*, \ H(x)-x\right>\nonumber\\
		&\quad-(H(x)-x)^T\nabla_x F(x)(H(x)-x)\nonumber\\
		&=\left<F(x)+H(x)-x, \  H(x)-x^*\right>\label{dot W a}\\
		&\quad+\left<x-x^*, \ -F(x)\right>\label{dot W b}\\
		&\quad-(H(x)-x)^T\nabla_x F(x)(H(x)-x)\label{dot W c}
		\end{align}
	\end{subequations}
	Next, we will prove that (\ref{dot W a}), (\ref{dot W b}) and (\ref{dot W c}) are all nonpositive. For $\xi$ and $\chi$, the projection has the following property \cite{Fukushima:Equivalent}
	\begin{align}
	\label{projection feature}
	\left<\ \xi-\text{Proj}(\xi)_{S}, \ \chi -\text{Proj}(\xi)_{S}\ \right>  \le 0  \quad \forall \chi \in S \nonumber
	\end{align}		
	Set $\xi=x-F(x)$, $\chi=x^*$, then we have 
	\begin{align}
	\left<F(x)+H(x)-x,\  H(x)-x^*\right>&\le 0 . 
	\end{align}
	This implies that (\ref{dot W a}) is nonpositive.
	
	Write $x_1:=(p_i^g, v_i, P_{ik}, l_{ik}, \hat{p}_i)$ and $x_2:=(\mu _i, \epsilon_i, \lambda _{ik}, \gamma _{ik},$ $\rho _{ik})$, then $L$ is convex in $x_1$ and concave in $x_2$. It can be verified that 
	\begin{align}
	&\left<x-x^*, \ -F(x)\right> =-(x_1-x_1^*)^T\nabla_{x_1}^TL + (x_2-x_2^*)^T\nabla_{x_2}^TL \nonumber\\
	&\qquad \le L(x_1^*,x_2)-L(x_1,x_2) +L(x_1,x_2)-L(x_1,x_2^*)  \nonumber\\
	&\qquad = \underbrace{ L(x_1^*,x_2)-L(x_1^*,x_2^*)}_{\le 0} +\underbrace{L(x_1^*,x_2^*)-L(x_1,x_2^*)}_{\le 0} \nonumber\\
	&\qquad \le 0
	\label{saddle property}
	\end{align}
	This implies that (\ref{dot W b}) is nonpositive.
	
	For (\ref{dot W c}), we have
	\begin{align}
	\label{derivative Lyapunov2}
	-(H(x)-x)^T\nabla_x F(x)(H(x)-x)&=-\dot {x}^T\nabla_x F(x) \dot {x}\nonumber\\
	&=-\dot {x}^TQ \dot {x}\nonumber\\
	&\le 0
	\end{align}
		\newcounter{TempEqCnt2}
	\begin{figure*}[!t]
		\begin{align}
		\label{Q}
		Q=\left[ {\begin{array}{*{20}{c}}
			{\nabla_{p_i^g}G + I_n+K^2}& {-T}& 0 & K & -K^2 &{-I_n} &  K & 0 & 0 & 0 & 0&0
			\\
			{ - T^\text{T}}&{\left[ \frac{{2{\rho _{ik}}}}{{{v_i}}} \right]_d + I_{2m}}&0&D^\text{T}& - T^\text{T}& 0 &(\tilde I^1)^\text{T}&(\tilde I^1)^\text{T}&0&0&\left[{\frac{{2{P_{ik}}}}{{{v_i}}}} \right]_d&0
			\\
			0&0&0&0&0&0&{ -[r_{ik}]_d}&0&0&0&{ - \tilde I^1}&0
			\\
			K&D&0&M_n+I_n&-K&I_n&0&{\tilde I^2}&{ - I_n}&I_n& R&0
			\\
			-K^2&0&0&-K&K^2&0&-K&0&0&0&0&0
			\\
			I_n&{T}&0&0&0&0&0&0&0&0&0&0
			\\
			-K&0&0&-I_n&0&0&K&0&0&0&0&0
			\\
			0&{ - \tilde I^1}&[r_{ik}]_d&0&0&0&0&0&0&0&0&0
			\\
			0&{-\tilde I^{1}}&0&{ - (\tilde I^2)^\text{T}}&0&0&0&0&0&0&0&0
			\\
			0&0&0&I_n&0&0&0&0&0&0&0&0
			\\
			0&0&0&{ - I_n}&0&0&0&0&0&0&0&0
			\\
			0&\left[{ - \frac{{2{P_{ik}}}}{{{v_i}}}} \right]_d& (\tilde I^1)^\text{T} &-R^\text{T}&0&0&0&0&0&0&0&0
			\end{array}} \right]
		\end{align}		
		\hrulefill
		\vspace*{2pt}
	\end{figure*}  
	\setcounter{equation}{\value{TempEqCnt2}} 
	where $Q$ is given in (\ref{Q}) with 
	\begin{align}
%	&A=\text{diag}(a_i, i\in\mathcal{N} )\nonumber\\
	&M_n=\text{diag}\left({\sum\nolimits_{k \in {N_i}} {\frac{{2{\rho _{ik}}P_{ik}^2}}{{v_i^3}}} }\right) \nonumber\\		
	&{T}_{n\times 2m}(s,t)=   \left\{ \begin{array}{l}
	1, \ \text {if} \ (s,t+1)\in \mathcal{E} \ \text{or}\ (s,t-m+1)\in \mathcal{E}\\
	0, \ \ \ \ \ \ \text {otherwise}
	\end{array} \right.    \nonumber\\
	&{D}_{n\times 2m}(s,t)=   \left\{ \begin{array}{l}
	- \frac{{2{\rho _{s,t+1}}{P_{s,t+1}}}}{{v_s^2}}, \ \text {if} \ (s,t+1)\in \mathcal{E} \ \\
	\ \ \ \ \ \ \ \ \ \ \ \ \  \text{or}\ (s,t-m+1)\in \mathcal{E}\\
	0, \ \ \ \ \ \text {otherwise}
	\end{array} \right.    \nonumber\\		
	&{R}_{n\times 2m}(s,t)=   \left\{ \begin{array}{l}
	{ - \frac{{P_{s,t+1}^2}}{{v_s^2}}}, \ \text {if} \ (s,t+1)\in \mathcal{E} \ \\
	\ \ \ \ \ \ \ \ \ \ \ \text{or}\ (s,t-m+1)\in \mathcal{E}\\
	0, \ \ \ \ \  \text {otherwise}
	\end{array} \right.    \nonumber\\		
	&{\tilde I^1}_{m\times 2m}(s,t)=   \left\{ \begin{array}{l}
	1, \ \text {if} \ (s,t+1)\in \mathcal{E} \ \text{or}\ (s,t-m+1)\in \mathcal{E}\\
	0, \ \ \ \ \ \ \text {otherwise}
	\end{array} \right.    \nonumber\\		
	&{\tilde I^2}_{m\times 2m}(s,t)=   \left\{ \begin{array}{l}
	{  1},\ \ \ \text {if} \ (s,t+1)\in \mathcal{E}\ \text{and}\ s\le t \\
	{  -1}, \ \text {if} \ (s,t-m+1)\in \mathcal{E}\ \text{and}\ s>t-m \\
	0, \ \ \ \ \ \ \text {otherwise}
	\end{array} \right.    \nonumber		
	\end{align}
	$Q$ is a semi-definite positive matrix. $I$ is a identity matrix, the subscript implies its dimension. $[c_i]_d$ denotes the diagonal matrix composed of $c_i$ with proper dimensions. Moreover, $Q$ can be divided into two matrices, one of which is skew-symmetric and the other is positive symmetric.

	Note that the index set $\sigma_\rho$ may change during the decreasing of $U$.
	We have the following observations:
	\begin{itemize}
		\item The set $\sigma_\rho$ is reduced, which only happens when ${{{P_{ik}^2}}/{{{v_i}}} - {l_{ik}}}$ goes through zero, from negative to positive. Hence an extra term will be added to $U$. As this term is initially zero, there is no discontinuity of $U$ in this case.
		\item The set $\sigma_\rho$ is enlarged when $\rho_{ik}$ goes to zero from positive while ${{{P_{ik}^2}}/{{{v_i}}} - {l_{ik}}}\le 0$. Here $U$ will lose a positive term $\left(\frac{{P_{ik}^2}}{{{v_i}}} - {l_{ik}}\right)^2/2$, causing discontinuity.
	\end{itemize}
	
	Hence, $U$ keeps decreasing even when $\sigma_\rho$ changes, which implies 1) of Lemma \ref{lemma derivative of Lyapunov}.
	In addition, note that \cite[Theorem 3.1]{Fukushima:Equivalent} proves that $ - F(x) ^T \left( H(x)-x \right)-\frac{1}{2} ||H(x)-x||^2_2 \geq 0$. 	
	Therefore, we have
	$$\frac{1}{2}||x-x^*||_2^2\le U(t)\le U(0)$$
	which implies that $x(t)$ is bounded. Then, 2) of Lemma \ref{lemma derivative of Lyapunov} holds.
	
	Given an initial point $x(0)$ there is a compact set $\Omega_0 := \Omega(x(0)) \subset S$ such that $x(t)\in\Omega_0$ 
	for $t\geq 0$ and $\dot U(x) \leq 0$ in $\Omega_0$. 
	
	In addition, $U$ is radially unbounded and positively definite except at equilibrium. As $U$ and $\dot U$ are nonpathological, we conclude that any trajectory $x(t)$ starting from $\Omega_0$ converges to the largest weakly invariant subset $Z^*$ contained in $Z^+=\{\ x\in \Omega_0\ |\ \dot U(x) =0\ \}$ \cite[Proposition 3]{Bacciotti:Nonpathological}, proving  the third assertion.
	
	Now, we will prove the last assertion of Lemma \ref{lemma derivative of Lyapunov}. To satisfy $\dot U(x)= 0$, both terms in (\ref{saddle property}) have to be zero, implying that 
	$$L(x_1^*,x_2)\equiv L(x_1^*,x_2^*)$$
	must hold in $Z^+$. Differentiating with respect to $t$ gives 
	\setcounter{equation}{8}
	\begin{align}
	\left(\frac{\partial}{\partial x_2}L(x_1^*,x_2(t))\right)^T\cdot\dot x_2(t)=0 =\dot x_2(t)^T\dot x_2(t)
	\end{align}
	The second equality holds due to (\ref{mu dynamics})-(\ref{rho dynamics}). Then, we can conclude $\dot x_2(t)=0$ due to the boundedness of $x(t)$, which implies that $\mu _i, \epsilon_i, \lambda _{ik}, \gamma _{ik},$ $\rho _{ik}$ are constants and $y_i=z_i=0$ in $Z^+$. We can obtain $\dot l_{ik}=\dot{\hat{p}}_i=0$ from (\ref{current square dynamics}), (\ref{hat p dynamics}) as well as the boundedness of $x(t)$. 
	%	 Now, we will prove $\dot P_{ik}=0$. From (\ref{line power dynamics}), only $2\rho_{ik}\frac{P_{ik}}{v_i}$ is indeterminate. As $\dot \rho_{ik}\equiv0$, $\rho_{ik}$ can be categoried into two situations: 1) $\rho_{ik}=0$, 2) $\rho_{ik}>0$. If $\rho_{ik}=0$, the right side of $\dot P_{ik}$ is constant. In this case, $\dot P_{ik}=0$ holds due to the boundedness of $P_{ik}$. If $\rho_{ik}>0$, we have ${\frac{{P_{ik}^2}}{{{v_i}}} = {l_{ik}}}$, equally ${\frac{{P_{ik}}}{{{v_i}}} = \frac{l_{ik}}{P_{ik}}}$
	
	Combining (\ref{derivative Lyapunov2}) and (\ref{Q}), we have 
	\begin{align}
	\label{derivative Lyapunov3}
	\dot U(x)&\le-\dot {x}^TQ \dot {x}\nonumber\\
	&=-\sum\limits_{i \in {\cal N}} {({\dot p_i^g})^T\cdot\nabla_{p^g}^2f\cdot\dot p_i^g}  - \sum\limits_{i \in {\cal N}} { {{{\left( {\dot p_i^g - \sum\limits_{k \in {N_i}} {{\dot P}_{ik}}} \right)}^2}} }  \nonumber\\
	&\quad- \sum\limits_{i \in {\cal N}} { {\frac{{2{\rho _{ik}}}}{{{v_i}}}{{\left( {{{\dot P}_{ik}} - \sum\limits_{k \in {N_i}} \frac{{\left| {{P_{ik}}} \right|}}{{{v_i}}}{{\dot v}_i}} \right)}^2}} }\nonumber\\
	&\quad-\sum\limits_{i \in {\cal N}} {(\dot v_i+k_i\dot p^g_i-k_i\dot{\hat p}_i)^2}  
	\end{align}
	We can directly get $\dot p_i^g=0$ due to the A1. From $\dot v_i+k_i\dot p^g_i-k_i\dot{\hat p}_i=0$ and $\dot p_i^g=\dot{\hat{p}}_i=0$, we have $\dot v_i=0$. If $\rho_{ik}=0$, then $\dot{P}_{ik}$ is a constant, implying $\dot{P}_{ik}=0$. If $\rho_{ik}>0$, then ${\frac{{P_{ik}^2}}{{{v_i}}} = {l_{ik}}}$, implying $P_{ik}$ a constant. Thus, $\dot{P}_{ik}=0$ always holds.
	Consequently, we have that $\dot x(t)=0$ in $Z^*$, which is the last assertion of Lemma \ref{lemma derivative of Lyapunov}.
\end{proof}

%We can further prove that only the equilibrium point can exist in $Z^*$. 

\begin{proof}[Proof of Theorem \ref{theorem convergence}]	
	Fix any initial state $x(0)$ and consider the trajectory $(x(t), t\geq 0)$ of (\ref{projection rewritten dynamics}). As mentioned in the proof of Lemma \ref{lemma derivative of Lyapunov}, $x(t)$ stays entirely in a compact set $\Omega_0$. Hence there exists an infinite sequence of time instants ${t_k}$ such that $x(t_k)\to \hat {x}^*$ as $t_k\to\infty$, 
	for some $\hat x^* \in Z^*$. The 4) in Lemma \ref{lemma derivative of Lyapunov} guarantees that $\hat x^*$ is an equilibrium point of the (\ref{projection rewritten dynamics}), and hence $\hat x^*=H(\hat{x}^*)$.
	Thus, using this specific equilibrium point $\hat {x}^*$ in the definition of $U$, we have 
	\begin{align}
	U^* =\lim\limits_{t\to \infty} U(x(t)) &= \lim\limits_{t_k\to \infty} U(x(t_k)) \nonumber\\
	&\qquad=\lim\limits_{x(t_k) \to \hat x^*} U\big( x(t_k)\big) = U(\hat x^*) = 0 \nonumber
	\end{align}
	Here, the first equality uses the  fact that
	$U(t)$ is nonincreasing in $t$; the second equality uses the fact that
	$t_k$ is the infinite sequence of $t$; the third equality uses the fact that $x(t)$ is
	absolutely continuous in $t$; the fourth equality is due to the upper semi-continuity of $U(x)$, and the last equality holds as $\hat {x}^*$ is an equilibrium point of $U(x)$. 
	
	The quadratic term $(x-\hat x^*)^T(x-\hat x^*)$
	in $U(x)$ then implies that $x(t)\to \hat {x}^*$ as $t\to \infty$, which completes the proof. 
\end{proof}

% you can choose not to have a title for an appendix
% if you want by leaving the argument blank
%\section{}
%Appendix two text goes here.

% that's all folks
\end{document}